\newcommand{\selection}{single-choice\xspace}
\newcommand{\Selection}{Single-Choice\xspace}
\newcommand{\calt}{{\cal T}}
\newcommand{\infr}[2]
        {\renewcommand{\arraystretch}{1.5}
        \begin{array}{c}
        #1\\
        \hline
        #2
        \end{array}}
\newcommand{\auxarrow}
        {\mathop{\longrightarrow}}
\newcommand{\arrow}[1]
        {\, \auxarrow\limits^{#1} \,}
\newtheorem{definition}{Definition}[section]
\newtheorem{corollary}{Corollary}[section]
\newtheorem{theorem}{Theorem}[section]
\newtheorem{lemma}{Lemma}[section]
\newcommand{\proof}{{\em Proof.}\ }
\newcommand{\Tout}[1]{!\langle {#1}\rangle}
\newcommand{\Tin}[1]{?({#1})}
\newcommand{\ToutK}[1]{k!\langle {#1}\rangle}
\newcommand{\TinK}[1]{k?({#1})}
\newcommand{\Tbranchindex}[4]{\&\{{#1}_{#3}:{#2}_{#3}\}_{#3\in #4}}
\newcommand{\Tbranch}[2]{\Tbranchindex{#1}{#2}{i}{I}}
\newcommand{\Tbranchsingledec}[3]{\&^{#3}\{{#1}:{#2}\}}
\newcommand{\Tbranchsingle}[2]{\&\{{#1}:{#2}\}}
\newcommand{\Tbranchset}[3]{\&\{{#1}\!:\!{#2}\}_{#1\in #3}}
\newcommand{\TbranchK}[2]{k\Tbranchindex{#1}{#2}{i}{I}}
\newcommand{\TbranchindexM}[5]{\&\{{#1}_{#3}(#5_{#3}):{#2}_{#3}\}_{#3\in #4}}
\newcommand{\TbranchM}[3]{\TbranchindexM{#1}{#2}{i}{I}{#3}}
\newcommand{\Tbranchsimple}[2]{\&\{{#1}:{#2}\}}
\newcommand{\Tselectindex}[4]{\oplus\{{#1}_{#3}:{#2}_{#3}\}_{#3\in #4}}
\newcommand{\Tselect}[2]{\Tselectindex{#1}{#2}{i}{I}}
\newcommand{\Tselectsingle}[2]{\oplus\{{#1}:{#2}\}}
\newcommand{\Tselectset}[3]{\oplus\{{#1}:{#2}\}_{#1\in #3}}
\newcommand{\TselectK}[2]{k\Tselectindex{#1}{#2}{i}{I}}
\newcommand{\TselectindexM}[5]{\oplus\{{#1}_{#3}\langle #5_{#3}\rangle:{#2}_{#3}\}_{#3\in #4}}
\newcommand{\TselectM}[3]{\TselectindexM{#1}{#2}{i}{I}{#3}}
\newcommand{\Tselectsimple}[2]{\oplus\{{#1}:{#2}\}}
\newcommand{\Trec}[1]{\mu \mathbf{#1}}
\newcommand{\Tvar}[1]{\mathbf{#1}}
\newcommand{\Tend}{\mathbf{end}}
\newcommand{\context}[3]{\mathcal{#1}[{#2}]^{#3}}
\newcommand{\rderiv}{\rightarrow}
\newcommand{\notrderiv}{\rightarrow_{\mathsf{err}}}
\newcommand{\unfold}[2]{\mathsf{unfold}^{#1}(#2)}
\newcommand{\depth}{\mathsf{depth}}
\newcommand{\noinf}{T^\mathsf{noinf}}
\newcommand{\selsubtype}{{<\!\!<}}
\newcommand{\subtype}{{\leq}}
\newcommand{\subtypea}{\;{\leq_{\mathsf{a}}}\;}
\newcommand{\subtypet}{\;{\leq_{\mathsf{t}}}\;}
\newcommand{\semT}[2]{[\!\![{#1}]\!\!]^{#2}}
\newcommand{\semS}[1]{[\!\![{#1}]\!\!]}
\newcommand{\decore}[1]{\mathsf{ann}(#1)}
\newcommand{\undecore}[1]{\mathsf{unann}(#1)}
\newcommand{\recclose}[1]{\mathsf{topUnfold}(#1)}
\journal{Journal of \LaTeX\ Templates}
\begin{document}

\begin{frontmatter}

  \title{Undecidability of Asynchronous Session Subtyping}

  \author{Mario Bravetti
  } \address{University of Bologna, Department of Computer Science and Engineering / FOCUS INRIA\\ Mura Anteo Zamboni 7, 40126 Bologna, Italy }

  \author{Marco Carbone
  } \address{Department of Computer Science, IT University of
    Copenhagen\\ Rued Langgaards Vej 7, 2300 Copenhagen, Denmark}

\author{Gianluigi Zavattaro
  } \address{University of Bologna, Department of Computer Science and Engineering / FOCUS INRIA\\ Mura Anteo Zamboni 7, 40126 Bologna, Italy }




\begin{abstract}
  Session types are used to describe communication protocols in
  distributed systems and, as usual in type theories, session subtyping 
  characterizes 
  substitutability of the communicating processes. 
  We investigate the (un)decidability of subtyping for 
  session types in asynchronously communicating systems.
  We first devise a core undecidable subtyping relation that is 
  obtained by imposing limitations on the 
  structure of types. 
  Then, as a consequence of this initial undecidability result,
  we show that (differently from what
  stated or conjectured in the literature) the three
  notions of asynchronous subtyping defined so far for session types 
  are all undecidable. Namely, we consider
  the asynchronous session subtyping by Mostrous and
  Yoshida~\cite{IandCmostrous} for binary sessions, the relation by
  Chen et al.~\cite{MariangiolaPreciness} for binary sessions under
  the assumption that every message emitted is eventually consumed,
  and the one by Mostrous et al.~\cite{ESOP09} for multiparty session
  types.
%
%
%
  Finally,  
  by showing that two fragments of the core
  subtyping relation are decidable, we
  evince that
  further restrictions on the 
  structure of types 
  make
  our core subtyping relation decidable.
%
%
%
%
%

\end{abstract}

\begin{keyword}
  Session Types\sep Subtyping\sep Undecidability \sep Queue Machines
\end{keyword}

\end{frontmatter}


\section{Introduction}
\label{sec:introduction}
Session types~\cite{HVK98,HYC16} are types for describing the
behaviour of communicating systems, and can be used as specifications 
of distributed protocols to be checked against implementations.
Such check, done by means of a typing system, guarantees that
communications at any endpoint of the implemented system are always
matched by the corresponding intended partner. As a consequence, it is
ensured that communication errors, e.g., deadlock, will never
occur. This approach provides a compositional way of checking the
correctness of distributed systems.

As an example, consider a simple on-line shop: clients can
buy a list of items by following the protocol expressed
by the session type
$$S_{client} = \Trec t.\oplus\{\text{add\_to\_cart}: \Tvar t\ ,\ \text{pay}: \Tend\}$$
indicating a recursive behaviour, according to which the client decides whether to add
an item and keep interacting with the store, or to pay
and conclude the session. 
For the sake of simplicity we consider session types where
(the type of) communicated data is abstracted away.

We call \emph{output selection}\footnote{In session type terminology \cite{HVK98,HYC16}, 
this construct is usually simply called \emph{selection}; we call it output selection
because we consider a simplified syntax for session types in which there
is no specific separate construct for sending one output. 
Anyway, such an output type 
could be seen as an output selection with only one choice.}
the construct $\oplus\{l_1:T_1, \dots, l_n:T_n\}$. It is used to
denote a point of choice in the communication protocol:
each choice has a label $l_i$ and a continuation $T_i$. 
In communication protocols,
when there is a point of choice, there is usually a peer that
internally takes the decision and the other involved peers receive
communication of the selected branch.
Output selection is used to describe the behaviour of the peer that takes the decision:
indeed, in our example it is the client that decides when to stop
adding items to the cart and then move to the payment.

The symmetric behaviour of the
shopping service is represented by the complementary session type
$$S_{service} = \Trec t.\& \{\text{add\_to\_cart}: \Tvar t\ ,\ \text{pay}: \Tend\}.$$
We call \emph{input branching}\footnote{In session type terminology this
construct is simply called \emph{branching}. We call it input branching
for symmetric reasons w.r.t. those discusses in the previous footnote.} the construct $\&\{l_1:T_1, \dots, l_n:T_n\}$. 
It is used to describe the behaviour
of a peer that receives communication of the selection
done by some other peers. In the example, indeed, the service receives
from the client the decision about the selection.



When composing systems whose interaction protocols have been specified
with session types, it is 
significant to consider variants
of their specifications that still preserve safety properties. 
In the above example of the on-line shop, the client can be
safely replaced by another one with session type
$$T_{client}=\oplus\{\text{add\_to\_cart}: \oplus \{\text{pay}: \Tend\}\}$$
indicating that only one item is added to the shopping cart
before paying.
But also the shopping service could be safely replaced by another one
offering also the $\text{remove\_from\_cart}$ functionality:
$$T_{service}=\Trec t.\& \{\text{add\_to\_cart}: \Tvar t\ ,\ \text{remove\_from\_cart}: \Tvar t\ ,\ \text{pay}: \Tend\}.$$
Formally, subtyping relations have been defined for session types
to precisely capture this safe replacement notion.

Gay and Hole~\cite{GH05} are the first ones who studied subtyping for
session types in a context where protocols involve only two peers (i.e.\ are binary)
and communication is synchronous. Later, Mostrous et al.~\cite{ESOP09}
extended this notion to multiparty session types with {\em
  asynchronous} communication. Both articles propose an algorithm for
checking subtying, but
the one proposed by Mostrous et
al.~\cite{ESOP09}, differently from what stated therein, 
is not always terminating in the sense that there are cases
in which it diverges and never gives an answer.
An example of divergent execution is 
discussed in the \emph{Remark} paragraph of 
\S\ref{sec:multiparty}.

%
Later work 
by Mostrous and Yoshida~\cite{IandCmostrous},
Mostrous~\cite{dimitristhesis} and Chen et
al.~\cite{MariangiolaPreciness} addresses subtyping in variants of an
asynchronous setting for binary sessions. In particular~Chen et
al.~\cite{MariangiolaPreciness} focus on binary sessions in which
messages sent by a partner are guaranteed to be eventually received.  
Such articles
conjecture that an algorithm for checking asynchronous session
subtyping exists, although, in his PhD thesis,
Mostrous~\cite{dimitristhesis} expresses a few doubts about the
decidability of asynchronous subtyping (pp.~178-180), because of the
need for infinite simulations.

In this work, 
we prove that the subtyping relations defined 
by Mostrous and Yoshida~\cite{IandCmostrous},
Chen et al.~\cite{MariangiolaPreciness}, 
and Mostrous et al.~\cite{ESOP09} are undecidable.
We proceed by identifying a core asynchronous subtyping
relation and show it is undecidable: all other undecidability
results are obtained by reduction from this initial relation.

The core relation, denoted by $\selsubtype$, is named asynchronous
\selection relation. Such a relation is obtained by first defining
(following the approach by Mostrous and Yoshida~\cite{IandCmostrous})
a standard asynchronous subtyping 
$\subtype$ and then reduce it by imposing additional constraints: $T$
and $S$ are in \selection relation, written $T \,\selsubtype \,S$, if $T
\, \subtype \, S$, all output selections in $T$ have a single choice 
(output selections are covariant~\footnote{Covariant means that the
  bigger type has more choices than the smaller type.}, thus $S$ is
allowed have output selections with multiple choices), all input branchings in $S$ have 
a single choice (input branchings are
contravariant~\footnote{Contravariant means that the bigger type has
  less choices than the smaller type.}, thus $T$ is allowed to have
input branchings with multiple choices), and additionally both $T$
and $S$ do not have consecutive infinite output selections.
This last condition is added to encompass the
subtyping defined by Chen et al.~\cite{MariangiolaPreciness}
that, as discussed above, requires all messages to be eventually received:
in fact, if consecutive infinite output selections are not allowed,
it is not possible to indefinitely delay inputs.

For instance, considering the simple on-line shop example, we have:
$$T_{client} \ \not\!\!\!\!\selsubtype\ S_{client}$$
because $S_{client}$ has consecutive infinite output selections;
and $$T_{service} \ \not\!\!\!\!\selsubtype\ S_{service}$$
because $S_{service}$ has input branchings with more than one choice.

If we consider a different behavior for the shopping service, where each input branching has single choice 
$$S'_{service} = \& \{\text{add\_to\_cart}: \& \{\text{pay}: \Tend\} \}.$$
we have, instead, $T_{service} \,\selsubtype \, S'_{service} $.

The proof of undecidability of $\selsubtype$ is by reduction from 
the acceptance problem in queue machines. Queue machines are a
Turing powerful computational model composed of a finite control that consumes 
and introduces symbols in a queue, i.e. a First-In First-Out (FIFO) structure.
The input to a queue machine is given by a sequence of symbols,
ended by a special delimiter $\$$, that is initially present in the queue.
The finite control is defined by a finite set of states (one of which being 
the initial state) and a transition function that given the current state and 
the consumed symbol, i.e. the one taken from the beginning of the queue, 
returns the next state and a sequence of symbols to be added at the end 
of the queue. The input is accepted by the queue machine whenever queue 
is emptied.
As an example, one can define a queue machine able to accept the
strings $a^n b^n$, with $n \geq 0$, by considering a finite control with
the following states: 
\begin{itemize}
\item
an initial state that expects to consume one among two possible symbols:
an $a$ and then move to the second state, or 
the delimiter $\$$ (thus accepting);
\item
the second state that cyclically consumes each of the remaining symbols $a$,
re-introducing them at the end of the queue, 
and then moves to the third state by consuming 
the first $b$;
\item
the third state that cyclically consumes each of the remaining symbols $b$,
re-introducing them at the end of the queue, 
and then returns to the initial
state by consuming the $\$$ and re-enqueueing it.
\end{itemize}
Even if queue machines are similar to pushdown-automata, they are
strictly more powerful. For instance, it is trivial to extend
the above queue machine with an additional state in order to accept
strings  
$a^n b^n c^n$, with $n \geq 0$. Queue machines are not only more expressive than
pushdown-automata, but they are Turing complete. Intuitively, 
this follows from the fact that by using a queue instead of a stack, it 
is possible to access any symbol in the structure without losing the symbols 
in front. In fact, it is sufficient to re-introduce such symbols at the end 
of the queue, as done in the above example e.g. with the symbols $a$ and $b$  
in the second and the third state,
respectively. This mechanism makes it possible to simulate the tape of a Turing
machine by using the queue.

Being Turing powerful, acceptance of a string 
is undecidable for queue machines. We show (Theorem \ref{thm}) that given
a queue machine and an input string, it is alway possible to define
two session types $T$ and $S$ such that $T \,\selsubtype\, S$ if and only
if the given input string is not accepted by the considered queue machine.
From this we 
conclude that the $\selsubtype$ relation is also undecidable.

This core undecidability result allows us to prove by reduction the
undecidability of $\subtype$ as well as other more complex relations
including the three asynchronous subtypings in the literature discussed above.
Namely, we prove the undecidability of the following subtypings:
$\subtype_s$ that includes also send and receive actions and
corresponds with (a fragment of) the subtyping defined
by Mostrous and Yoshida~\cite{IandCmostrous},
$\subtype_o$ that disallows orphan messages and coincides
with the subtyping defined by Chen et al.~\cite{MariangiolaPreciness},
and $\subtype_m$ that deals with multiparty session types
and corresponds with the subtyping introduced by Mostrous et
al.~\cite{ESOP09}.

As an additional result, we show that 
further restrictions on the branching/selection structure of types 
  make our core subtyping relation $\selsubtype$ decidable. 
In fact,
  by imposing any of two possible restrictions on $\selsubtype$ --- namely, in both subtype 
  and supertype all input branchings (or all output selections) have one choice only --- 
  the obtained relation turns out to be decidable. We thus define the subtyping
  relations $\selsubtype_\mathsf{sin}$ (both types are single-choice on inputs)
  and $\selsubtype_\mathsf{sout}$ (both types are single-choice on outputs) by considering the two above restrictions on the asynchronous \selection 
  relation $\selsubtype$ and prove that both $\selsubtype_\mathsf{sin}$ and $\selsubtype_\mathsf{sout}$  
 are decidable.
  As a matter of fact, we prove decidability for larger relations w.r.t. $\selsubtype_\mathsf{sin}$
  and $\selsubtype_\mathsf{sout}$ where we do not impose the constraint about
  no consecutive infinite outputs.





\begin{figure}[t]
  \begin{center}
    \begin{tabular}{ll}
      \begin{tabular}{lllll}
        $\subtype_s$  & standard binary \cite{IandCmostrous}\\
        $\subtype_o$  & orphan-message-free \cite{MariangiolaPreciness}\\
        $\subtype_m$  & multiparty \cite{ESOP09}\\        
        $\longrightarrow$ & reduces to\\[3cm]
      \end{tabular}\hspace{-3cm}
      &
        \begin{tabular}{ccccccc}
          \xymatrixcolsep{0pc}
          \xymatrix{
          & & &&\subtype_m      \\
          & & &&\subtype_s \ar@{->}[u] &  & \subtype_o      \\
          &&&&\leq \ar@{->}[u]         \\
          &&&&              & \selsubtype \ar@{->}[ul]\ar@{->}[uur]\\
          \ar@{-}[rrrrrrrrr]^{\hspace{-5cm}\text{undecidable}}_{\hspace{-5cm}\text{decidable}}
          &&& \hspace{4cm}&&&&&& \\
          &&&&\selsubtype_\mathsf{sin} \ar@{->}[uur]& & \selsubtype_\mathsf{sout} \ar@{->}[uul]
                                             }
        \end{tabular}
    \end{tabular}
  \end{center}
  \caption{Lattice of the asynchronous subtyping relations considered in this paper.}
  \label{fig:lattice}
\end{figure}

Figure \ref{fig:lattice} depicts the relations discussed
in this paper as a lattice representing a $\preceq_1 \, \longrightarrow \, \preceq_2$ order. $\preceq_1 \, \longrightarrow \,\preceq_2$ 
means that it is possible to algorithmically reduce
the problem of deciding the relation $\preceq_1$ into the problem of deciding $\preceq_2$.
As discussed above, $\subtype_s$, $\subtype_o$ and $\subtype_m$ are taken from the
literature, while $\subtype$, $\selsubtype$, $\selsubtype_\mathsf{sin}$ and $\selsubtype_\mathsf{sout}$
are defined in this paper 
to characterize as tightly as possible 
the boundary between decidability 
and undecidability for asynchronous session subtyping relations. Obviously, when a relation 
is undecidable 
all relations above it (it reduces to) are also undecidable,
while when a relation turns out to be decidable 
all relations below it (that reduce to it) are decidable as well.

\paragraph{Structure of the paper} 

In \S\ref{sec:subtyping} we introduce
a core language of session types with only branching/selection and recursion,
and define for it the asynchronous subtyping
relation $\subtype$.
In \S\ref{sec:undecidable}
we restrict subtyping to $\selsubtype$
and we show that  
such relation is undecidable.
In \S\ref{subsec:impact}, we discuss how our undecidability result 
allows us to prove the undecidability of other asynchronous 
subtypings, namely $\subtype$, $\subtype_s$, $\subtype_o$,
$\subtype_m$,
and subtyping for communicating finite state machines
(CFSMs).
In \S\ref{sec:decidable} we discuss the two decidable relations $\selsubtype_{\mathsf{sin}}$
and $\selsubtype_{\mathsf{sout}}$, obtained as further restrictions of $\selsubtype$.
Finally, in \S\ref{sec:concl}
we comment the related literature and draw some concluding remarks.

%

\section{Asynchronous Subtyping}
\label{sec:subtyping}
In this section, we give a definition of a core session type language
and define the asynchronous subtyping relation $\subtype$ following
the approach by Mostrous and Yoshida~\cite{IandCmostrous}.

\subsection{Session Types} 
We start by presenting a very simple session type language (with only
branching/selection and recursion) which is sufficient to prove our undecibility
result.
 
\begin{definition}[Session types]\label{def:sessiontypes}
  Given a set of labels $L$, ranged over by $l$, the syntax of binary
  session types is given by the following grammar:
\begin{displaymath}
  \begin{array}{lrl}
    T &::=&   
              \Tselect{l}{T} 
              \mid \Tbranch{l}{T} \mid
               \Trec t.T
               \mid \Tvar t
               \mid \Tend
  \end{array}
\end{displaymath}
\end{definition}
In our session type language we simply consider session termination
$\Tend$, recursive definitions $\Trec t.T$, with $\Tvar t$ being the
recursion variable, output selection $\Tselect{l}{T}$ 
and input branching $\Tbranch{l}{T}$. 
Each possible choice 
is labeled by a label
$l_i$, taken from the set of labels $L$, followed by a session
continuation $T_i$. Labels in a branching/selection are pairwise distinct.



\subsection{Subtyping}
We consider a notion of asynchronous subtyping corresponding to the
subtyping relation by Mostrous and Yoshida~\cite{IandCmostrous}
applied to our language. 
In particular we formalize the property of {\it output anticipation} (which, as we will see, characterizes asynchronous subtyping) by using the notion of {\it input context} as in Chen et
al.~\cite{MariangiolaPreciness} and Mostrous and
Yoshida~\cite{IandCmostrous}.
In order to define subtyping, we first
need to introduce $n$-unfolding and input contexts.

The $n$-unfolding function unfolds nested recursive definitions to
depth $n$.
\begin{definition}[$n$-unfolding]\label{def:unfolding}
  \begin{displaymath}
    \begin{array}{l@{\qquad}l}
      \unfold 0T = T 
      \\
      \unfold 1{\Tselect{l}{T}} =  \oplus\{{l}_{i}:{\unfold 1 {T_i}}\}_{i\in I}
      \\
      \unfold 1{\Tbranch{l}{T} } = \&\{{l}_{i}:{\unfold 1 {T_i}}\}_{i\in I}

      \\
      \unfold 1{\Trec t.T} = T\{\Trec t.T/\Tvar t\}
      \\
      \unfold 1{\Tvar t} = \Tvar t
      \\
      \unfold 1{\Tend} = \Tend
      \\
      \unfold n{T} = \unfold 1{\unfold {n-1}T}
    \end{array}
  \end{displaymath}
\end{definition}

\begin{definition}[Input Context]\label{def:context}
  An input context $\mathcal A$ is a session type with multiple holes
  defined by the following syntax:
  \[
  \mathcal A\ ::=\ [\,]^n \quad\mid\quad 
  \Tbranch{l}{\mathcal A}
  \]
  An input context $\mathcal A$ is well-formed whenever all its holes
  $[\,]^n$, with $n \in \mathbb{N}^+$, are consistently enumerated, i.e.\ there exists $m \geq 1$ such that $\mathcal A$ includes one and only one $[\,]^n$ for each $n \leq m$. 
  Given a well-formed input context $\mathcal A$ with holes indexed 
  over $\{1,\ldots,m\}$ and types $T_1$,\dots, $T_k$, we use $\context {A} {T_k} {k\in
        \{1,\ldots, m\}}$ to denote the type obtained by filling
  each hole $k$ in $\mathcal A$ with the corresponding term $T_k$.
\end{definition}

From now on, whenever using contexts we will assume them to be
well-formed.

For example, consider the input context 
$$\mathcal A=\&\{l_1: []^1,\ l_2: []^2\}$$
we have:
$$\context {A} {\oplus\{l:T_i \}} {i\in
        \{1,2\}} = \&\!\big\{l_1:\oplus\{l:T_1 \}, l_2:\oplus\{l:T_2 \} \big\}$$

\smallskip

We are finally ready to define our notion of subtyping.
\begin{definition}[Asynchronous Subtyping,
  $\subtype$]\label{def:subtyping}
  \label{subtype}
  $\mathcal R$ is a subtyping relation whenever $(T,S)\in\mathcal R$
  implies that:
  \begin{enumerate}


  \item 
  
    if $T=\Tend$ then $\exists n\geq 0$ such that
    $\unfold nS = \Tend$;

  \item 
  
    if $T=\Tselect{l}{T}$ then $\exists n\geq 0,\mathcal A$ such that
    \begin{itemize}
    \item
      $\unfold nS = \context {A} {\Tselectindex l{S_k}{j}{J_k}} {k\in
        \{1,\ldots, m\}}$ \qquad for some $J_k$,
    \item $\forall k\in\{1,\ldots, m\}.  I\subseteq J_k$ and
    \item
      $\forall i\in I,
      (T_i,\context {A} {{S_{ki}}} {k\in \{1,\ldots, m\}})\in\mathcal R$;
    \end{itemize}

  \item 
  
  if $T=\Tbranch{l}{T}$ then $\exists n\geq 0$ such that
    $\unfold nS = \Tbranchindex lSjJ$, $J\subseteq I$ and
    $\forall j\in J. (T_j,S_j)\in\mathcal R$;

  \item 
  
  if $T= \Trec t.{T'} $ then $(T'\{T/\Tvar{t}\}, S)\in\mathcal R$.

  \end{enumerate}
  We say that $T$ is a subtype of $S$, written $T \,\subtype \, S$, if there 
  is a subtyping relation $\mathcal R$ such that $(T,S) \in \mathcal R$.
\end{definition}


An important characteristic of asynchronous subtyping (formalized by rule~$2.$ above) is the following one. In a subtype output selections can be anticipated so to bring them before the input branchings that in the supertype occur in front of them. 
For example the type 
$$S = \&\!\big\{l_{\mathbf{but1}}:\oplus\{l_{\mathbf{coffee}}:T_1 \}, l_{\mathbf{but2}}:\oplus\{l_{\mathbf{coffee}}:T_2 \} \big\}$$  
has the following subtype
$$ 
T = \oplus\big\{l_{\mathbf{coffee}}:\&\!\{l_{\mathbf{but1}}:T_1, l_{\mathbf{but2}}:T_2 \} \big\}$$ 
where the output selection with label $l_{\mathbf{coffee}}$ is anticipated w.r.t.\ the input branching with labels $l_{\mathbf{but1}}$ and $l_{\mathbf{but2}}$. That is, since in the supertype the input branching with labels $l_{\mathbf{but1}}$ and $l_{\mathbf{but2}}$ occurs in front of the output selection with label $l_{\mathbf{coffee}}$ (which is present in {\it all} its input branches), such an output selection can be anticipated so to bring it before the $l_{\mathbf{but1}}$/$l_{\mathbf{but2}}$ input branching. 

It is, thus, immediate to verify that, according to Definition \ref{def:subtyping}, we have \mbox{$T \leq S$}.
In particular, in rule~$2.$ the well-formed input context considered to express the output $l$ anticipation is
$\mathcal A=\&\{l_{\mathbf{but1}}: []^1,\ l_{\mathbf{but2}}: []^2\}$. By considering this context, the supertype $S$ can be written as 
$$\context {A} {\oplus\{l_{\mathbf{coffee}}:T_1 \}} {i\in
        \{1,2\}} = \&\!\big\{l_{\mathbf{but1}}:\oplus\{l_{\mathbf{coffee}}:T_1 \}, l_{\mathbf{but2}}:\oplus\{l_{\mathbf{coffee}}:T_2 \} \big\}$$

Notice that in general an output selection can be anticipated even if it occurs in a larger input context, such as, for example
$$\mathcal A=\&\{l_{\mathbf{but1}}: \&\{l_{\mathbf{but3}}:[]^1\},\ l_{\mathbf{but2}}: []^2\}$$
%

Conceptually output anticipation
reflects the fact that we are considering asynchronous communication 
protocols in which messages are stored in queues. 
In this setting, it is safe to replace
a peer that follows a given protocol with another
one following a modified protocol where
outputs are anticipated: in fact, the difference 
is simply that such outputs 
will be stored earlier in the communication queue.
%
%
%
%

\subsection{Examples}
\label{sec:examples}

Consider the types
$$\begin{array}{l}
T=\Trec t.\Tbranchsimple{l}{\Tselectsimple{l}{\Tvar t}}\\
S=\Trec t.\Tbranchsimple{l}{\Tbranchsimple{l}{\Tselectsimple{l}{\Tvar
      t}}}
\end{array}$$
We have 
$T \, \subtype \, S$ as the following infinite set of type pairs
is a subtyping relation:
$$
\begin{array}{ll}
  \{ & \big(T\ ,\ S\big), 
     \big(\Tbranchsimple{l}{\Tselectsimple{l}{T}}\ ,\ S\big),\ 
     \big(\Tselectsimple{l}{T}\ ,\ \Tbranchsimple{l}{\Tselectsimple{l}{S}}\big),\ \\
     & \big(T\ ,\ \Tbranchsimple{l}{S}\big),\ 
     \big(\Tbranchsimple{l}{\Tselectsimple{l}{T}}\ ,\ \Tbranchsimple{l}{S}\big),\ 
     \big(\Tselectsimple{l}{T}\ ,\ S\big),\\
     & \big(T\ ,\ \Tbranchsimple{l}{\Tbranchsimple{l}{S}}\big),\ 
     \dots\\
     & \big(T\ ,\ \Tbranchsimple{l}{\Tbranchsimple{l}{\Tbranchsimple{l}{S}}}\big),\ \dots\\
     & \dots\ \ \ \}
\end{array}
$$
Notice that the types on the r.h.s. ($S$ and subsequent ones) can
always mimic the initial actions of the corresponding type on the l.h.s. ($T$ and
subsequent ones). Pairs are presented above in such a way that: the second one is reached from 
the first one by rule $4.$\ of Definition \ref{def:subtyping} (recursion), the third one is reached from 
the second one by rule $3.$\ of Definition \ref{def:subtyping} (input), the first one in the subsequent line is
reached from the third one by rule $2.$\ of Definition \ref{def:subtyping} (output), and similarly (using the same rules) in the subsequent lines.
Notice that, every time an output
${\Tselectsimple{l}{\_}}$ must be mimicked, the r.h.s. must be
unfolded, and the corresponding output is anticipated, since
it is preceded by inputs only (i.e. the output fills an input context). 
The effect of the anticipation of the output is that
a new input $\Tbranchsimple{l}{\_}$ is accumulated at the beginning
of the r.h.s.  It is worth to observe that
every accumulated input $\Tbranchsimple{l}{\_}$ is
eventually consumed in the simulation game, but 
the accumulated inputs grows unboundedly.

As another example consider
$$\begin{array}{lll}
T= \Trec t.\& \!\!\! & \!\!\! \{\\
&\hspace{.1cm}l_{\mathbf{but1}}: \oplus\{l_{\mathbf{coffee}}:\Tvar t\}, \hspace{-3cm}&\\
&\hspace{.1cm}l_{\mathbf{but2}}: \oplus\{l_{\mathbf{tea}}:\Tvar t\} \hspace{-3cm}& \\
&\!\!\!\}&\\[.1cm]
S= \Trec t.\& \!\!\! & \!\!\! \{ &\\
&\hspace{.1cm}l_\mathbf{but2}: \oplus \!\!\! & \!\!\! \{\\
& &\hspace{.1cm}l_\mathbf{coffee}:\Tvar t,\\
& &\hspace{.1cm}l_\mathbf{tea}:\&\{l_\mathbf{but1}:\Tvar t,\ l_\mathbf{but2}:\Tvar t\}\\
& & \!\!\!\}\\
&\!\!\!\} &
\end{array}$$
We have $T \, \subtype \, S$ for the following reasons.  Type $T$ repeatedly
alternates input and output, the input corresponding to an input branching with labels $l_\mathbf{but1}$ and $l_\mathbf{but2}$ (where $\mathbf{but}$ stands for ``button''), and the output
with only one label: either $l_\mathbf{coffee}$ or $l_\mathbf{tea}$. Also $S$
infinitely repeats input and output, but, depending on which of its inputs it performs, the corresponding input branching can have fewer
choices than $T$ (in the case of the input branching with just one label, i.e.\ $l_\mathbf{but2}$). The
output, instead, always corresponds to an output selection with labels $l_\mathbf{coffee}$ and $l_\mathbf{tea}$.  Such a difference between $T$ and $S$ is not problematic due
to contravariance on input branchings and covariance on output selections.
Type $S$ also differs because after the input with label $l_\mathbf{but2}$ and the
output with label $l_\mathbf{tea}$ the type $\&\{l_\mathbf{but1}: S,\ l_\mathbf{but2}: S\}$ is reached,
and this term (no matter which input is chosen) may perform two consecutive inputs, the second one available upon
unfolding of $S$.  Type $T$ does not have two consecutive inputs
because it always alternates input and output.  Nevertheless, we still
have $T \, \subtype \; \&\{l_\mathbf{but1}: S,\ l_\mathbf{but2}: S\}$ because, as discussed in the
previous example, according to our notion of asynchronous subtyping (rule $2.$\ of Definition \ref{def:subtyping})
the output in the r.h.s. can be always anticipated to match the output
actions of the l.h.s., if they are preceded by inputs only.

\section{Core Undecidability Result}
\label{sec:undecidable}
%
In this section we prove our core undecidability result for a restricted subtyping relation.
This relation is called 
asynchronous \selection subtyping and corresponds to the
restriction of $\subtype$ to pairs of session types $(T,S)$
such that $T$ has output selections with one choice only,
$S$ has input branchings with one choice only, and both
$T$ and $S$ cannot have infinite sequences of outputs.



\subsection{Asynchronous single-choice subtyping}
In order to formally define such relation we need some preliminary
definitions. 
\begin{definition}[Session types with single
  outputs]\label{def:sessiontypesnooutput}
  Given a set of labels $L$, ranged over by $l$, the syntax of binary
  {\em session types with single outputs} is given by the following
  grammar:
  \begin{displaymath}
    \begin{array}{lrl}
      T^{\textsf{out}} &::=&   
                             \Tselectsingle{l}{T^{\textsf{out}}} 
                             \mid \Tbranch{l}{T^{\textsf{out}}} \mid
                             \Trec t.{T^{\textsf{out}}}
                             \mid \Tvar t
                             \mid \Tend
    \end{array}
  \end{displaymath}
\end{definition}

Session types with single outputs are all those session types where
inputs can have multiple choices while outputs must be singletons.

\begin{definition}[Session types with single
  inputs]\label{def:sessiontypesnoinput}
  Given a set of labels $L$, ranged over by $l$, the syntax of binary
  {\em session types with single inputs} is given by the following
  grammar:
  \begin{displaymath}
    \begin{array}{lrl}
      T^{\textsf{in}} &::=&   
                             \Tselect{l}{T^{\textsf{in}}} 
                             \mid \Tbranchsingle{l}{T^{\textsf{in}}} \mid
                             \Trec t.{T^{\textsf{in}}}
                             \mid \Tvar t
                             \mid \Tend
    \end{array}
  \end{displaymath}
\end{definition}

Session types with single inputs, instead, are all those session types
where inputs are singletons and outputs may have multiple choices.

\begin{definition}[Session types with input guarded recursion]\label{def:sessiontypesnoinfinite}
  The set of session types $\noinf$ is composed of the session types $T$
  that satisfy the following condition: for every subterm of $T$ of the form $\Trec t.R$, 
  for some $\Tvar t$ and $R$, every occurrence of $\Tvar t$ in $R$ is inside a subterterm 
  of $R$ of the form $\Tbranch{l}{S}$, for some set of labels $l_i$ and session types $S_i$.
\end{definition}

Session types with input guarded recursion have an important property:
there are no consecutive infinite outputs.


The asynchronous \selection relation is defined as the subset of
$\subtype$ where types on the left-hand side of the relation are with
single outputs, 
types on the right-hand side are with single inputs,
and both types have no infinite sequences of outputs.

\begin{definition}[Asynchronous \Selection
  Relation]\label{def:selection_subtyping}
  The asynchronous \selection relation $\selsubtype$ is defined as:
  \[ \selsubtype\ =\  \subtype\cap (T^{\textsf{out}} \times T^{\textsf{in}})
  \cap (\noinf \times \noinf)\]
\end{definition}

\paragraph{Remark}
Note that the asynchronous \selection relation is not reflexive. In
fact, any type that has multiple (non single) choices is not related to itself, e.g.,
$\&\{l_1: \Tend,\ l_2: \Tend\} \ \not\!\!\!\!\selsubtype\ \&\{l_1:
\Tend,\ l_2: \Tend\}$
simply because the term on right-hand side has more than one input branch.

\subsection{Queue Machines}\label{subsec:queuemachines}
The proof of undecidability of the asynchronous \selection relation is
by reduction from the acceptance problem for queue machines. Queue
machines have been already informally presented in the Introduction,
we now report their formal definition.
%
\begin{definition}[Queue machine]\label{def:queuemachines}
  A queue machine $M$ is defined by a six-tuple
  $(Q , \Sigma , \Gamma , \$ , s , \delta )$ where:
  \begin{itemize}
  \item $Q$ is a finite set of states;
  \item $\Sigma \subset \Gamma$ is a finite set denoting the input
    alphabet;
  \item $\Gamma$ is a finite set denoting the queue alphabet (ranged
    over by $A,B,C,X$);
  \item $\$ \in \Gamma -\Sigma$ is the initial queue symbol;
  \item $s \in Q$ is the start state;
  \item $\delta : Q \times \Gamma \rightarrow Q\times \Gamma ^{*}$ is
    the transition function.
  \end{itemize}
\end{definition}

In the Introduction we have informally described a queue machine
that accepts the language $a^n b^n$, we now present its formal definition.
Let $M=(\{q_1,q_2,q_3,q_s\},\{a,b\},\{a,b,\$\},\$ , q_1 , \delta )$  
with $\delta$ defined as follows:
\begin{itemize}
\item
$\delta(q_1,a) = (q_2,\epsilon)$,
$\delta(q_1,\$) = (q_1,\epsilon)$,
$\delta(q_1,b) = (q_s,b)$;
\item
$\delta(q_2,a) = (q_2,a)$,
$\delta(q_2,\$) = (q_s,\$)$,
$\delta(q_2,b) = (q_3,\epsilon)$;
\item
$\delta(q_3,a) = (q_s,a)$,
$\delta(q_3,\$) = (q_1,\$)$,
$\delta(q_3,b) = (q_3,b)$;
\item
$\delta(q_s,X) = (q_s,X)$ for $X \in \{a,b,\$\}$.
\end{itemize}
Differently from the informal definition in the Introduction,
here (since, according to Definition \ref{def:queuemachines}, the transition function $\delta$ is expected to be total) we have to consider an additional sink state which is 
entered whenever an unexpected symbol is consumed from the queue.
Once this state is entered, it will be no longer possible to leave it,
and every consumed symbol will be simply re-added to the queue.

We now formally define queue machine computations.
  
\begin{definition}[Queue machine computation]  
  A {\em configuration} of a queue machine is an ordered pair
  $(q,\gamma)$ where $q\in Q$ is its {\em current state} and
  $\gamma\in\Gamma ^{*}$ is the 
  {\em queue} ($\Gamma ^{*}$ is the Kleene closure of $\Gamma$).  The
  starting configuration on an input string $x$ is $(s , x \$)$.  The
  transition relation $\rightarrow _{M}$ over configurations
  $Q \times \Gamma ^{*}$, leading from a configuration to the
  next one, is defined as follows. For any $p,q \in Q$, $A \in \Gamma$
  and $\alpha,\gamma \in \Gamma ^{*}$ we have
  $(p,A\alpha )\rightarrow _{M}(q,\alpha \gamma)$ whenever
  $\delta (p,A)=(q,\gamma)$.  A machine $M$ accepts an input $x$ if it
  eventually terminates on input $x$, i.e. it reaches a blocking
  configuration with the empty queue (notice that, as the transition
  relation is total, the unique way to terminate is by emptying the
  queue).  Formally, $x$ is accepted by $M$ if
  $(s,x\$)\rightarrow _{M}^{*}(q,\epsilon)$ where $\epsilon$ is the
  empty string and $\rightarrow _{M}^{*}$ is the reflexive and
  transitive closure of $\rightarrow _{M}$.
\end{definition}

Going back to the queue machine $M$ defined above, 
if we consider the input $aabb$ we have the following computation:
$$
\begin{array}{l}
(q_1,aabb\$) \rightarrow _{M}
 (q_2,abb\$) \rightarrow _{M}
 (q_2,bb\$a) \rightarrow _{M}
 (q_3,b\$a) \rightarrow _{M}
 (q_3,\$ab) \rightarrow _{M} \\
 (q_1,ab\$) \rightarrow _{M}
 (q_2,b\$) \rightarrow _{M}
 (q_3,\$) \rightarrow _{M}
 (q_1,\$) \rightarrow _{M}
 (q_1,\epsilon) 
\end{array}
$$
Hence, we can conclude that the string $aabb$ is accepted by $M$
(as any other string of type $a^n b^n$).

Turing completeness of queue machines is discussed by
Kozen~\cite{KozenBook} (page~354, solution to exercise~99). A
configuration of a Turing machine (tape, current head position and
internal state) can be encoded in a queue, and a queue machine can
simulate each move of the Turing machine by repeatedly consuming and
reproducing the queue contents, only changing the part affected by the
move itself.  
Formally, given any Turing machine $T$ we have that a string $x$ is accepted by $T$ if and only if 
$x$ is accepted by the queuing machine $M$ obtained as the encoding of $T$.
The undecidability of acceptance of an input string $x$
by a machine $M$ follows directly from such encoding.

\subsection{Modelling 
  Queue Machines with Session Types}
\label{subsec:encoding}
Our goal is to construct a pair of types, say $T$ and $S$, from a
given queue machine $M$ and a given input $x$, such that: $T \,\selsubtype\, S$ if and only if $x$
is not accepted by $M$.
%
%
Intuitively, type $T$ encodes the finite control of $M$, i.e., its
transition function $\delta$, starting from its initial state
$s$. And type $S$ encodes the machine queue that initially contains
$x \$$, where $x$ is the input string $x=X_1 \cdots X_n$ of length $n
\geq 0$.  The set of labels $L$ for such types $T$ and $S$ is
$M$'s queue alphabet~$\Gamma$.

Formally, the queue of a machine is encoded into a session type as
follows:
\begin{definition}[Queue Encoding]\label{def:queueenc}
  Let $M=
  (Q , \Sigma , \Gamma , \$ , s , \delta
  )$ be a queue machine and let $C_1 \cdots C_m \in \Gamma^*$, with $m
  \geq 0$.  Then, the queue encoding function $\semS{C_1 \cdots
    C_m}$ is defined as:
  $$
  \begin{array}{l}
    \semS{C_1\!\cdots\! C_m} =
    \Tbranchsingle{C_1\!\!}{
    \!\ldots
    \Tbranchsingle{C_m}
    {\Trec{t}.\Tselectset{A}{\Tbranchsingle{A}{\Tvar{t}}}{\Gamma}}
    }
  \end{array}
  $$
  Given a configuration $(q,\gamma)$
  of $M$,
  the encoding of the queue $\gamma=C_1
  \cdots C_m$ is thus defined as $\semS{C_1 \cdots C_m}$.
\end{definition} 
Note that whenever $m=0$,
we have
$\semS{\epsilon}={\Trec{t}.\Tselectset{A}{\Tbranchsingle{A}{\Tvar{t}}}{\Gamma}}$.
Observe that we are using a slight abuse of notation:
in both output selections and input branchings, labels $l_A$, with $A\in \Gamma$, are simply denoted by $A$.

\begin{figure}[!t]
  \centering
  \includegraphics[width=3in]{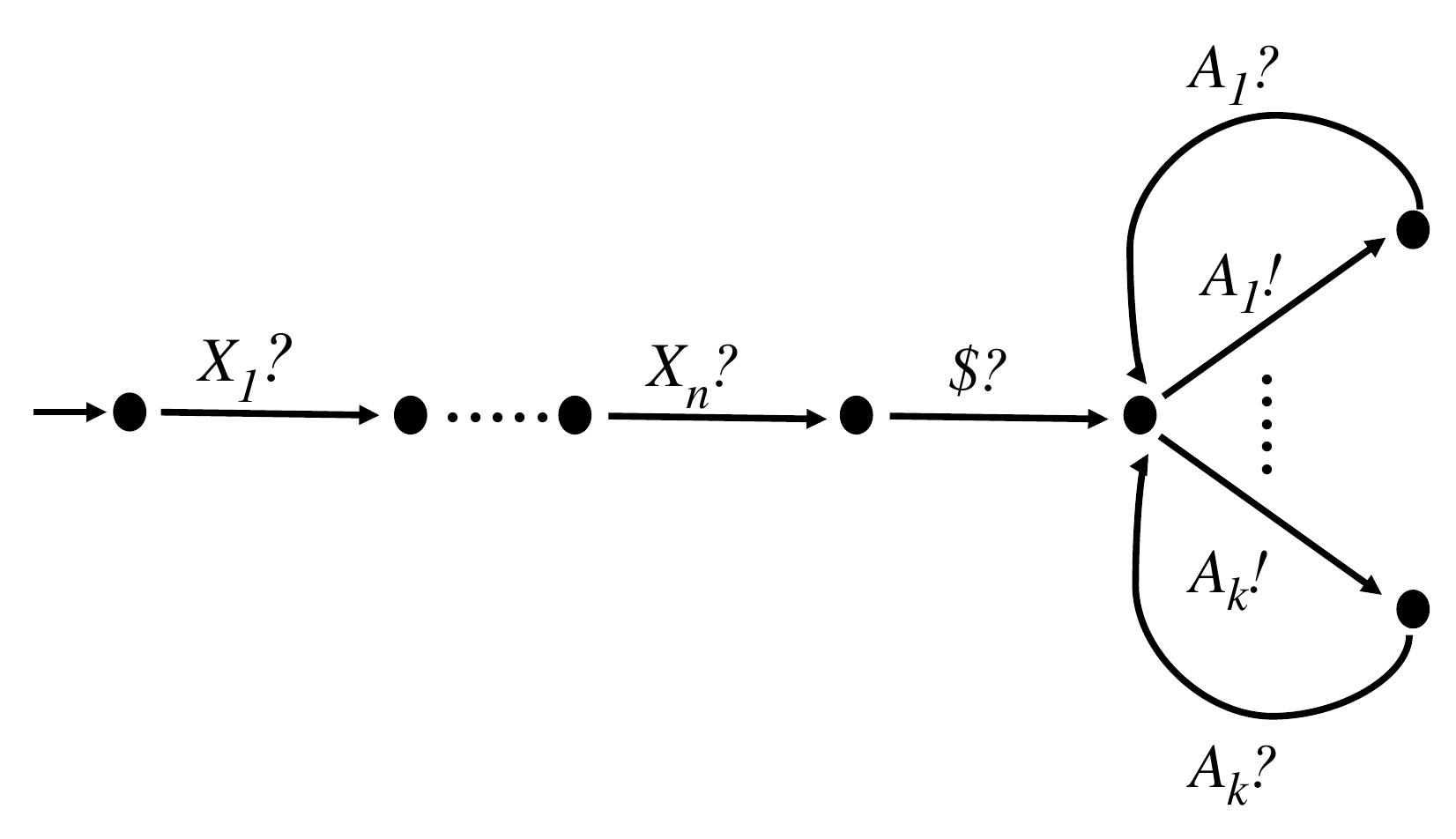}
  \caption{Labelled Transition System of a session type 
    encoding 
    the initial queue $X_1\cdots X_n\$$
  }
  \label{fig:queue}
\end{figure}
Figure~\ref{fig:queue} contains a graphical representation of 
the queue encoding
with its initial content $X_1\cdots X_n\$$.
In order to better clarify our development, we graphically represent
session types as labeled transition systems (in the form of
communicating automata \cite{BZ83}), where an output selection
$\Tselect{l}{T}$
is represented as a choice among alternative output transitions
labeled with ``$l_i
!$'', and an input branching
$\Tbranch{l}{T}$
is represented as a choice among alternative input transitions labeled
with ``$l_i ?$''.
%
Intuitively, we encode a queue containing symbols $C_1
\cdots C_m$ with a session type that starts with
$m$
inputs with labels $C_1$,
$\dots$,
$C_m$,
respectively. Thus, in Figure~\ref{fig:queue}, we have $C_1
\cdots C_m = X_1 \cdots X_n \$ $. 
After such sequence of inputs, representing the current queue content,
there is a recursive type representing the capability to enqueue new
symbols. Such a type repeatedly performs an output selection with one choice
for each symbol $A_i$
in the queue alphabet $\Gamma$
(with $k$
being the cardinality of $\Gamma$),
followed by an input labeled with the same symbol $A_i$.

\smallbreak

We now give the definition of the type modelling the finite control of
a queue machine, i.e., the encoding of the transition function
$\delta$.  
\begin{definition}[Finite Control Encoding]\label{def:controlEncoding}
  Let $M= (Q , \Sigma , \Gamma , \$ , s , \delta )$ be a queue machine
  and let $q\in Q$ and $\mathcal S\subseteq Q$. Then,
  \begin{displaymath}
    \begin{array}{l}
      \semT{q}{\mathcal S} = 
      \left \{
      \begin{array}{l}
          \Trec{q}. 
          \Tbranchset{A}{
          \Tselectsingle{B^A_1}{\cdots
          \Tselectsingle{B^A_{n_A}}{\semT{q'}{\mathcal S \cup q}
          }
          }
          }                  
          {\Gamma} 
        \\[1mm]
        \hspace{0.9cm}\text{if }q\not\in {\mathcal S} \text{ and } \delta(q,A)=(q',B^A_1\cdots B^A_{n_A})
        \\
        \\
        \Tvar{q}\qquad \mbox{if $q \in {\mathcal S}$}
      \end{array}
      \right.
    \end{array}
  \end{displaymath}
  The encoding of the transition function of $M$ 
  is then defined as $\semT{s}{\emptyset}$.
\end{definition}
The finite control encoding is a
recursively defined term with one recursion variable $\Tvar{q}$ for
each state $q \in Q$ of the machine.  
Above, $\semT{q}{\mathcal S}$ is a function that, given a state $q$ and
a set of states ${\mathcal S}$, returns a type representing the
possible behaviour of the queue machine starting from state $q$. Such
behaviour consists of first reading from the queue (input branching on
$A\in\Gamma$) and then writing on the queue a sequence of symbols
$B^A_1,\ldots, B^A_{n_A}$.  The parameter ${\mathcal S}$ is necessary
for managing the recursive definition of this type. In fact, as the
definition of the encoding function is itself recursive, this
parameter keeps track of the states that have been already encoded
(see example below).
%
%
%
%
\begin{figure}[!t]
  \centering
  \includegraphics[width=3in]{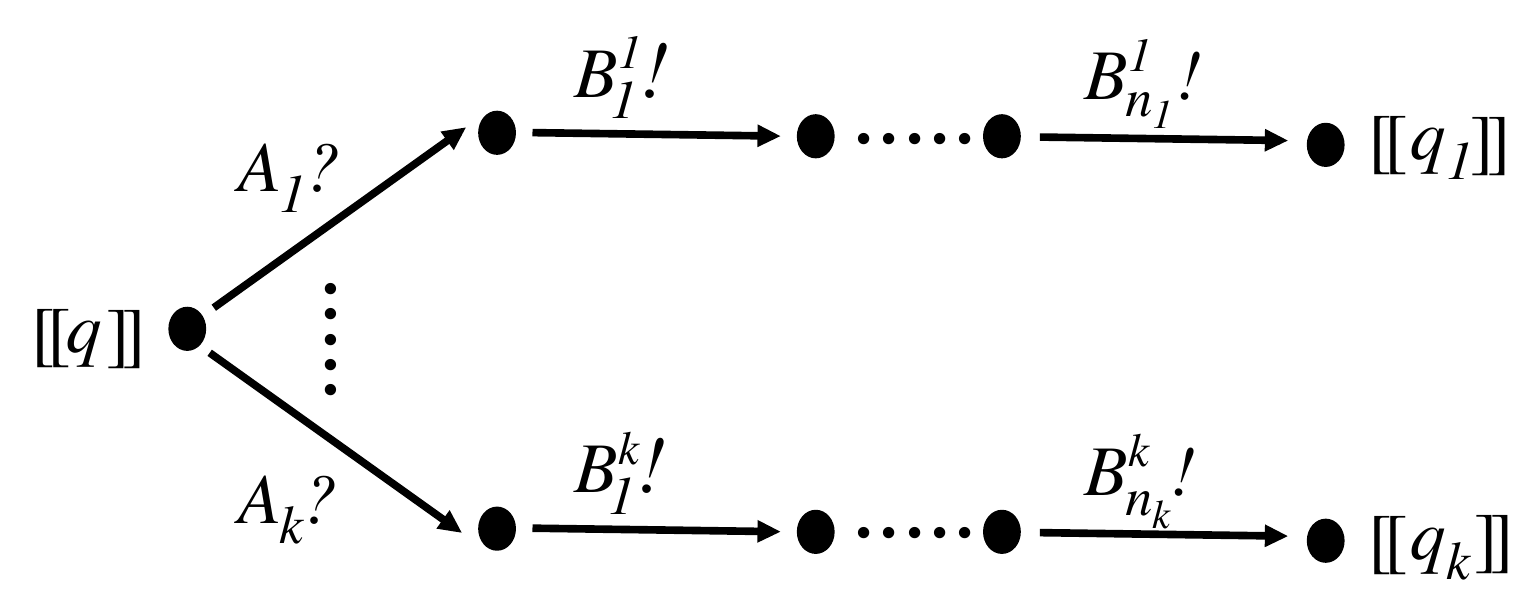}
  \smallbreak (for $\Gamma=\{A_i|i \leq k\}$ and
  $\delta(q,A_i)=(q_i,B^i_1\cdots B^i_{n_i})$ for every $i$.)
  \caption{Labelled Transition System of a session type 
    encoding a finite control.}
  \label{fig:control}
\end{figure}
In Figure~\ref{fig:control}, we report a graphical representation of
the Labelled Transition System corresponding to the session type that
encodes the queue machine finite control, i.e. the transition function
$\delta$. Each state $q \in Q$ is mapped onto a state $\semS{q}$ of a
session type, which performs an input branching with a choice for each
symbol in the queue alphabet $\Gamma$ (with $k$ being the cardinality
of $\Gamma$).  Each of these choices represents a possible character
that can be read from the queue.  After this initial input branching,
each choice continues with a sequence of outputs labeled with the
symbols that are to be inserted in the queue (after the symbol
labeling that choice has been consumed). This is done
according to function $\delta$, assuming that
$\delta(q,A_i)=(q_i, B^i_1 \cdots B^i_{n_i})$, with $n_i \geq 0$, for
all $i$ in $\{1, \dots, k\}$.  After the insertion phase, state
$\semS{q_i}$ of the session type corresponding to state $q_i$ of the
queue machine is reached.

Notice that, queue insertion actually happens in the encoding because,
when the encoding of the finite control performs an output of a $B$
symbol, the encoding of the queue must mimic such an output, possibly
by anticipating it. This has the effect of adding an input on $B$ at
the end of the sequence of initial inputs of the queue machine encoding.

Observe that our encodings generate terms that belong to the
restricted syntax of session types introduced in the previous section,
namely the queue encoding of Definition \ref{def:queueenc} produces
types in $T^{\textsf{in}}$, while the finite control encoding of
Definition \ref{def:controlEncoding} produces types in
$T^{\textsf{out}}$.

\paragraph{Example}
As an example, consider a queue machine with two states $s$ and
$q$, any non-empty input and queue alphabets $\Sigma$ and $\Gamma$, and a transition relation defined as follows: $\delta(s,A)=(q,A)$
and $\delta(q,A)=(s,\epsilon)$, for every queue symbol $A \in \Gamma$.  We have
that 
$$\begin{array}{lcl}
\semT{s}{\emptyset} & = &
\Trec{s}.  \Tbranchset{A}
{\Tselectsingle{A} {\semT{q}{\{s\}}} } {\Gamma} \\
&= &\Trec{s}.
\Tbranchset{A} {\Tselectsingle{A} {\Trec{q}.\Tbranchset{A}
    {\semT{s}{\{s,q\}}} {\Gamma}} } {\Gamma} \\
& = &\Trec{s}.
\Tbranchset{A} {\Tselectsingle{A} {\Trec{q}.\Tbranchset{A} {\Tvar s}
    {\Gamma}} } {\Gamma} 
\end{array}$$


\subsection{Properties of the Encodings}

We begin by proving that subtyping is preserved by reductions
of queue machines (modulo our encoding), and then we exploit
this property (Lemma \ref{lemmaOnlyIf}) to prove our core undecidability
result (Theorem \ref{thm}).

\begin{lemma}
\label{lemmaOnlyIf}
Consider a queue machine $M= (Q , \Sigma , \Gamma , \$ , s , \delta )$.
If $(q,\gamma) \rightarrow _{M} (q',\gamma')$
and $\semT{q}{\emptyset} \,\selsubtype \,\semS{\gamma}$ then
also $\semT{q'}{\emptyset} \,\selsubtype \,\semS{\gamma'}$.
\end{lemma}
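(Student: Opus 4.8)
The plan is to trace the coinductive subtyping game of Definition~\ref{def:subtyping} forward, starting from the hypothesised pair $(\semT{q}{\emptyset},\semS{\gamma})$, and to show that the moves dictated by the machine step lead exactly to $(\semT{q'}{\emptyset},\semS{\gamma'})$. The key structural fact I would use is that $\subtype$, being the union of all subtyping relations, is itself a subtyping relation and hence closed under the rules: whenever a pair lies in $\subtype$, the continuation pairs mandated by the applicable rule lie in $\subtype$ as well. First I would decompose the transition: a step $(q,\gamma)\rightarrow_M(q',\gamma')$ forces $\gamma=A\alpha$ for some $A\in\Gamma$ and $\alpha\in\Gamma^*$, with $\delta(q,A)=(q',\beta)$ where $\beta=B^A_1\cdots B^A_{n_A}$ and $\gamma'=\alpha\beta$; correspondingly $\semS{\gamma}=\Tbranchsingle{A}{\semS{\alpha}}$. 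Starting from $(\semT{q}{\emptyset},\semS{\gamma})\in\subtype$, I would apply rule~$4.$ to unfold the outer $\Trec{q}$ on the left, exposing an input branching $\Tbranchset{A}{\cdots}{\Gamma}$ with label set $\Gamma$, and then rule~$3.$: since the right-hand side is the single input $\Tbranchsingle{A}{\semS{\alpha}}$, contravariance $\{A\}\subseteq\Gamma$ holds and the $A$-branch is selected on both sides, yielding $(\Tselectsingle{B^A_1}{\cdots\Tselectsingle{B^A_{n_A}}{T_{q'}}},\,\semS{\alpha})\in\subtype$, where $T_{q'}$ is the left continuation that reaches state $q'$.

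Next I would apply rule~$2.$ (output anticipation) $n_A$ times, once for each symbol written by $\delta$. At each step the left is a single output $\Tselectsingle{B}{\cdot}$ with $B\in\Gamma$, while the right is a prefix of single inputs followed by the recursive enqueue loop $\Trec{t}.\Tselectset{A}{\Tbranchsingle{A}{\Tvar t}}{\Gamma}$. Unfolding the loop once exposes, immediately after the input prefix, the output selection $\Tselectset{A}{\Tbranchsingle{A}{\Trec{t}.\ldots}}{\Gamma}$ over all of $\Gamma\ni B$, so rule~$2.$ applies with an input context whose single hole sits right after the prefix. Crucially, the input-context grammar of Definition~\ref{def:context} contains no output constructor, so this hole placement is forced regardless of which unfolding depth and context the relation uses to witness membership; selecting output $B$ turns the hole into $\Tbranchsingle{B}{\Trec{t}.\ldots}$, i.e.\ it appends exactly one input on $B$ to the queue encoding. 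Thus each output anticipation realises one enqueue, and after the $n_A$ outputs the right-hand side has become precisely $\semS{\alpha\beta}=\semS{\gamma'}$, leaving $(T_{q'},\semS{\gamma'})\in\subtype$.

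It remains to identify $T_{q'}$ with $\semT{q'}{\emptyset}$, and this is where I expect the main obstacle to lie: by construction $T_{q'}$ is the subterm $\semT{q'}{\{q\}}$ met inside the unfolding of $\semT{q}{\emptyset}$, with its bound variable replaced by $\semT{q}{\emptyset}$ through rule~$4.$, rather than the full encoding $\semT{q'}{\emptyset}$. The parameter $\mathcal S$ is only a memoisation device recording which states have already been bound, so I would prove an auxiliary substitution lemma stating that, for every state $r$ and set $\mathcal S$, substituting each $\Tvar{r'}$ with $r'\in\mathcal S$ by $\semT{r'}{\emptyset}$ in $\semT{r}{\mathcal S}$ yields a term with the same infinite unfolding as $\semT{r}{\emptyset}$, by induction on the number of states not yet in $\mathcal S$. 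Combining this with the routine invariance of $\subtype$ under folding and unfolding of recursion on either argument --- which follows from rule~$4.$ for the left-hand side and from the $\exists n$ unfolding quantifier of rules~$1.$--$3.$ for the right-hand side --- upgrades $(T_{q'},\semS{\gamma'})\in\subtype$ to $(\semT{q'}{\emptyset},\semS{\gamma'})\in\subtype$. Finally, since $\semT{q'}{\emptyset}\in T^{\textsf{out}}\cap\noinf$ and $\semS{\gamma'}\in T^{\textsf{in}}\cap\noinf$ because both are encodings of a configuration, the pair lies in the intersection defining $\selsubtype$, giving $\semT{q'}{\emptyset}\selsubtype\semS{\gamma'}$ as required.
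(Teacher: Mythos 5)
Your proposal follows essentially the same route as the paper's own proof: unfold the control encoding with rule~4, consume the head symbol with rule~3, anticipate the $n_A$ written symbols with $n_A$ applications of rule~2 (your observation that the input-context grammar contains no output constructor, so the hole is forced to sit immediately after the accumulated input prefix, is exactly the right justification, and one the paper leaves implicit), and finally repair the mismatch between the terms produced by the game and the official encodings. Two deltas between your sketch and the paper are worth recording. First, you claim that after the $n_A$ anticipations the right-hand side ``has become precisely $\semS{\gamma'}$''; that is not quite what the game yields. In rules~1--3 the unfolding depth $n$ is existentially quantified and is chosen by the witnessing relation $\mathcal R$, not by you, so already at the input step the continuation is of the shape $\unfold {n_0}{\semS{\alpha}}$, and at the end of the chase you hold $(\semT{q'}{\emptyset},\unfold{N}{\semS{\gamma'}})\in\mathcal R$ for some $N$, not the folded pair. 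The paper tracks these residual unfoldings explicitly (its $\unfold{n_0}{\cdot},\unfold{n_1}{\cdot},\dots$) and discharges them with its one genuinely technical observation: if $(T,\unfold nS)\in\mathcal R$ then $\mathcal R\cup\{(T',S)\mid T'\in\recclose{T}\}$ is again a subtyping relation, the closure under top-level left unfoldings being needed precisely because of rule~4. Your appeal to ``routine invariance of $\subtype$ under folding and unfolding on either argument'' is this observation in disguise; it is provable along the paper's lines, but as stated it waves through the only step of the proof that needs an actual construction. Second, and conversely, you are more careful than the paper on the identification of the left continuation: the paper silently writes $\semT{q'}{\emptyset}$ for the subterm $\semT{q'}{\{q\}}\{\semT{q}{\emptyset}/\Tvar{q}\}$ produced by the unfolding, which in general differs syntactically from $\semT{q'}{\emptyset}$ (they agree only up to fold/unfold of recursion); your proposed substitution lemma addresses a real subtlety the paper elides, although your ``same infinite unfolding'' resolution needs invariance of $\subtype$ under fold/unfold at \emph{arbitrary} depth, which is stronger than the top-level observation above and would itself require proof. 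Net assessment: same approach as the paper and sound in outline; the residual-unfolding bookkeeping and the explicit $\recclose{\cdot}$ folding lemma are where your sketch is thinner than the paper, while the $\semT{q'}{\{q\}}$ identification is where it is more scrupulous.
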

\proof
Assume that $(q,\gamma) \rightarrow _{M} (q',\gamma')$ with
$\gamma=C_1\cdots C_m$ and $\delta(q,C_1)=(q',B^{C_1}_1\cdots B^{C_1}_{n_{C_1}})$.
Then we have that $\gamma'=C_2\cdots C_m B^{C_1}_1\cdots B^{C_1}_{n_{C_1}}$.

Assume now $\semT{q}{\emptyset} \, \selsubtype \, \semS{\gamma}$;
this means that there exists an asynchronous subtyping relation ${\mathcal R}$
s.t. $(\semT{q}{\emptyset},\semS{C_1\cdots C_m}) \in {\mathcal R}$.
By item 4 of Definition \ref{subtype}, we also have
$$
\big(           
          \Tbranchset{A}{
          \Tselectsingle{B^A_1}{\cdots
          \Tselectsingle{B^A_{n_A}}{\semT{q'}{\emptyset}
          }
          }
          }                  
          {\Gamma} 
                        \  ,\
                        \semS{C_1\cdots C_m}\ \big) \in {\mathcal R}
$$
where the l.h.s. has been unfolded once.  By item 3 of Definition
\ref{subtype}, the presence of the above pair in ${\mathcal R}$
guarantees also that
$$
\big(\Tselectsingle{B^{C_1}_1}{
         \Tselectsingle{B^{C_1}_2}{\cdots
           \Tselectsingle{B^{C_1}_{n_{C_1}}}{\semT{q'}{\emptyset}
                              }
                            }
                          }
\ ,\
 \unfold{n_0}{\Tbranchsingle{C_2}{\cdots
           \Tbranchsingle{C_m}
             {Z}}
                              }
\ \big) \in  {\mathcal R}
$$
for some $n_0$,
and with $Z = {\Trec{t}.\Tselectset{A}{\Tbranchsingle{A}{\Tvar{t}}}{\Gamma}}$.
By item 2 of Definition \ref{subtype}, the presence of this last pair
in ${\mathcal R}$ guarantees also that
$$
\big(
         \Tselectsingle{B^{C_1}_2}{\cdots
           \Tselectsingle{B^{C_1}_{n_{C_1}}}{\semT{q'}{\emptyset}
                              }
                            }                         
\ ,\
 \unfold{n_1}{\Tbranchsingle{C_2}{\cdots
           \Tbranchsingle{C_m}
             {\Tbranchsingle{B^{C_1}_1}
             {Z}}}
                              }
\ \big) \in  {\mathcal R}
$$
for some $n_1$.
By repeating the same reasoning on $B^{C_1}_2,\cdots,B^{C_1}_{n_{C_1}}$
we conclude that also 
$$
\big(\semT{q'}{\emptyset}                      
\ ,\
 \unfold{n_{C_1}}{\Tbranchsingle{C_2}{\cdots
           \Tbranchsingle{C_m}
             {\Tbranchsingle{B^{C_1}_1}
             {\cdots \Tbranchsingle{B^{C_1}_{n_{C_1}}}
             {Z}}}}
                              }
\ \big) \in  {\mathcal R}
$$
for some $n_{C_1}$.

We now observe that if an asynchronous
subtyping relation $\mathcal R$ contains the pair
$(T,\unfold{n}{S})$, for some $n$, then we have that also 
the following is an asynchronous subtyping relation:
$\mathcal R \cup \{(T',S) | T' \in \recclose{T} \}$,
where $\recclose{T}$ is the minimal set of types that
contains $T$ and such that $\Trec{t}.{R} \in \recclose{T}$
implies $R\{\Trec{t}.{R} / \Tvar t\} \in \recclose{T}$.

In the light of this last observation we can now conclude that having 
$$
\big(\semT{q'}{\emptyset}                      
\ ,\
 \unfold{n_{C_1}}{\Tbranchsingle{C_2}{\cdots
           \Tbranchsingle{C_m}
             {\Tbranchsingle{B^{C_1}_1}
             {\cdots \Tbranchsingle{B^{C_1}_{n_{C_1}}}
             {Z}}}}
                              }
\ \big)
$$
in the asynchronous subtyping relation ${\mathcal R}$
implies that 
$$
\semT{q'}{\emptyset}                      
\ \selsubtype\
 {\Tbranchsingle{C_2}{\cdots
           \Tbranchsingle{C_m}
             {\Tbranchsingle{B^{C_1}_1}
             {\cdots \Tbranchsingle{B^{C_1}_{n_{C_1}}}
             {Z}}}}
                              }
$$
Notice that the r.h.s. corresponds to 
$\semS{\gamma'}$. Hence we have proved the thesis
$\semT{q'}{\emptyset}
\, \selsubtype \, \semS{\gamma'}$.
\qed

\medskip
We are now ready to prove our main theorem.
\begin{theorem}\label{thm}
  Given a queue machine $M= (Q , \Sigma , \Gamma , \$ , s , \delta )$,
  an input string $x$, and the two types $T=\semT{s}{\emptyset}$ and
  $S=\semS{x\$}$, we have that $M$ accepts $x$ if and only if
  $T \, \not\!\!\!\selsubtype \, S$.
\end{theorem}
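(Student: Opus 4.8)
The plan is to prove the biconditional by splitting on whether $M$ accepts $x$, using Lemma~\ref{lemmaOnlyIf} as the workhorse that transports $\selsubtype$ along a single machine step. I would phrase the statement in the equivalent form ``$M$ does \emph{not} accept $x$ if and only if $T \selsubtype S$'', reducing the theorem to two tasks: (i) \emph{acceptance implies non-subtyping}, and (ii) \emph{non-acceptance implies subtyping}, where $T=\semT{s}{\emptyset}$ and $S=\semS{x\$}$.

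For direction (i) I would argue by contradiction. If $M$ accepts $x$ there is a finite run $(s,x\$) = (p_0,\gamma_0) \rightarrow_M (p_1,\gamma_1) \rightarrow_M \cdots \rightarrow_M (p_\ell,\epsilon)$ ending with the empty queue. Assuming $\semT{s}{\emptyset} \selsubtype \semS{x\$}$ and applying Lemma~\ref{lemmaOnlyIf} once per step, I propagate subtyping along the run to obtain $\semT{p_\ell}{\emptyset} \selsubtype \semS{\epsilon}$. I then show this pair cannot belong to any subtyping relation: since $\semT{p_\ell}{\emptyset}$ is recursive, rule~$4.$\ forces its (mandatory) unfolding, which exposes an input branching over all of $\Gamma$; rule~$3.$\ then demands that the right-hand side unfold to a matching input branching. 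But $\semS{\epsilon}$ is precisely $Z = \Trec{t}.\Tselectset{A}{\Tbranchsingle{A}{\Tvar{t}}}{\Gamma}$, every unfolding of which carries an output selection (never an input branching) at the top, so no $n$ satisfies rule~$3.$. This contradiction gives $T \,\not\!\!\!\selsubtype\, S$.

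For direction (ii) I would build an explicit subtyping relation. Because $\delta$ is total and a configuration is terminal only when its queue is empty, non-acceptance means the run from $(s,x\$)$ is infinite and every reachable configuration has a non-empty queue. I take $\mathcal{R}$ to be the union, over all configurations $(q,\gamma)$ reachable from $(s,x\$)$, of the pair $(\semT{q}{\emptyset},\semS{\gamma})$ together with all the intermediate pairs that appear in the chain of the proof of Lemma~\ref{lemmaOnlyIf} (the unfolded control, then the successive pairs obtained by consuming the leading input $C_1$ and anticipating the outputs $B^{C_1}_1,\ldots,B^{C_1}_{n_{C_1}}$ one at a time through the input context built from the remaining queue). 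Verifying that $\mathcal{R}$ is a subtyping relation amounts to checking the clauses of Definition~\ref{subtype} for these pairs: the configuration pairs are discharged by rules~$4.$, $3.$ and $2.$ exactly as in Lemma~\ref{lemmaOnlyIf}, and each verification bottoms out at the pair for the \emph{next} configuration $(q',\gamma')$, which is again in $\mathcal{R}$ because the run is infinite; clause~$1.$ never fires since the encodings contain no $\Tend$. Crucially, non-emptiness of $\gamma$ is exactly what keeps rule~$3.$\ applicable: the queue always offers a genuine leading input $C_1 \in \Gamma$ that the control, which branches on all of $\Gamma$, can match. Then $(\semT{s}{\emptyset},\semS{x\$}) \in \mathcal{R}$ yields $T \selsubtype S$.

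I expect direction (ii) to be the main obstacle, and within it the bookkeeping of recursion. Two points are delicate: first, unfolding the control type $\semT{q}{\emptyset}$ — resolving its recursion variable and invoking the $\recclose{\cdot}$ (topUnfold) observation from Lemma~\ref{lemmaOnlyIf} to strip the $\unfold{n}{\cdot}$ on the right-hand side while staying inside the relation; second, tracking the right-hand side as it grows, since each anticipated output appends a fresh single input to the accumulated queue prefix, so that the subtyping game faithfully realizes the enqueueing $\gamma \mapsto \gamma'$. The structural constraints of $\selsubtype$ — single outputs on the left, single inputs on the right, and input-guarded recursion ruling out infinite output sequences — are precisely what guarantee that every intermediate pair stays within $\noinf \times \noinf$ and that the game never stalls; I would record these invariants explicitly as part of the closure check.
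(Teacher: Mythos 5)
Your proposal is correct and follows essentially the same route as the paper: the only-if direction propagates subtyping along the accepting run via Lemma~\ref{lemmaOnlyIf} and derives a contradiction at $\semS{\epsilon}$, whose unfoldings always expose an output selection and never an input branching, while the if direction builds exactly the paper's explicit relation $\mathcal{R}$ consisting of the configuration pairs together with the intermediate pairs from the lemma's chain, closed by the infinite run. One cosmetic remark: by Definition~\ref{def:selection_subtyping} the single-choice and $\noinf$ constraints apply only to the top pair $(T,S)$, not to every pair of the witnessing relation, so recording those invariants for all intermediate pairs is harmless but unnecessary.
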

\proof We prove the two directions separately.

{\em (Only if part).}  We first observe that
$\semT{q}{\emptyset} \,\not\!\!\!\!\selsubtype \, \semS{\epsilon}$ for
every possible state $q$. In fact, $\semT{q}{\emptyset}$ is a
recursive definition that upon unfolding begins with an input branching that implies (according to items 3. and 4. of Definition
\ref{subtype}) that also $\semS{\epsilon}$ (once unfolded, if needed)
should start with an input branching.  But this is false, in that, by
definition of the encoding we have
$\semS{\epsilon} =
{\Trec{t}.\Tselectset{A}{\Tbranchsingle{A}{\Tvar{t}}}{\Gamma}}$.
We can conclude that
$\semT{s}{\emptyset} \not\!\!\!\!\selsubtype \, \semS{x\$}$ because
otherwise, by repeated application of Lemma \ref{lemmaOnlyIf}, we
would have that the termination of the queue machine, i.e.
$(s,x\$)\rightarrow _{M}^* (q,\epsilon)$, implies the existence of a
state $q$ such that $\semT{q}{\emptyset} \,\selsubtype\, \semS{\epsilon}$.
But we have just proved that
$\semT{q}{\emptyset} \, \selsubtype \, \semS{\epsilon}$ does not hold, for
every state $q$.

{\em (If part).}  Our aim is to show that if
$T \!\not\!\!\!\!\selsubtype \, S$ then $M$ accepts $x$, which is
equivalent to showing that $T \, \selsubtype \, S$, assuming that $M$ does
not accept $x$.
When a queue machine does not accept an input,
the corresponding computation never ends. In our case, this means that
there is an infinite
sequence $(s,x\$) \rightarrow _{M} (q_1,\gamma_1) \rightarrow _{M}
\cdots \rightarrow _{M} (q_i,\gamma_i) \rightarrow _{M} \cdots$.
Let $\mathcal C$ be the set of reachable configurations, i.e.
$\mathcal C=\{(q_i,\gamma_i)\ |\ i \geq 0\}$ where we assume $(q_0,\gamma_0)=(s,x\$)$. 
We now define a relation $\mathcal R$ on types:
$$
\begin{array}{ll}
\mathcal{R} & =
\\
\{ 
&
\big(\ \semT{q}{\emptyset}, \semS{C_1 \cdots C_m}\ \big),\\
&
\big(\ 
\Tbranchset{A}{
          \Tselectsingle{B^A_1}{\cdots
          \Tselectsingle{B^A_{n_A}}{\semT{q'}{\emptyset}
          }
          }
          }                  
          {\Gamma}\ ,\ \semS{C_1 \cdots C_m}\ \big),
\\
&
\big(\Tselectsingle{B^{C_1}_1}{
         \Tselectsingle{B^{C_1}_2}{\cdots
           \Tselectsingle{B^{C_1}_{n_{C_1}}}{\semT{q'}{\emptyset}
                              }
                            }
                          }
\ ,\
 \Tbranchsingle{C_2}{\cdots
           \Tbranchsingle{C_m}
             {Z}
                              }
\ \big),                                                        

\\

&
\big(      
         \Tselectsingle{B^{C_1}_2}{\cdots
           \Tselectsingle{B^{C_1}_{n_{C_1}}}{\semT{q'}{\emptyset}
                              }
                            }                          
\ ,\
 \Tbranchsingle{C_2}{\cdots
           \Tbranchsingle{C_m}{
           \Tbranchsingle{B^{C_1}_1}{  
             {Z}
             					    }
                              }
                              }
\ \big),                                                        

\\

& \cdots

\\
&
\big(\      
         \semT{q'}{\emptyset}
                                                      
\ ,\
 \Tbranchsingle{C_2}{\cdots
           \Tbranchsingle{C_m}{
           \Tbranchsingle{B^{C_1}_1}{  \cdots
             {           \Tbranchsingle{B^{C_1}_{n_{C_1}}}{  
             {Z}
             					    }}
             					    }
                              }
                              }
\ \big)

\\

| & (q,C_1 \cdots C_m) \in \mathcal C, \delta(q,C_1)=(q',B^{C_1}_1\cdots B^{C_1}_{n_{C_1}}), \\
&
Z = {\Trec{t}.\Tselectset{A}{\Tbranchsingle{A}{\Tvar{t}}}{\Gamma}}
\ \ \}
\end{array}
$$
Notice that the type pairs listed in the above definition correspond
to the pairs discussed in the proof of Lemma \ref{lemmaOnlyIf}.  We
have that the above $\mathcal{R}$
is a subtyping relation because, using a reasoning similar to the one
reported in the proof of Lemma \ref{lemmaOnlyIf}, it is immediate to
see that each of the pairs satisfies the conditions in Definition
\ref{subtype} thanks to the presence of the subsequent pair.  The
unique pair without a subsequent pair is the last one, but this last
pair corresponds to the first one of the pairs corresponding to the
configuration $(q',C_2
\cdots C_m B^{C_1}_1\cdots B^{C_1}_{n_{C_1}}) \in {\mathcal
  C}$ reached in the queue machine computation after $(q,C_1 \cdots
C_m)$, i.e. $(q,C_1 \cdots C_m) \rightarrow_M (q',C_2 \cdots C_m
B^{C_1}_1\cdots
B^{C_1}_{n_{C_1}})$.  We can conclude observing that $T\ \selsubtype\
S$ because $T=\semT{s}{\emptyset}$, $S=\semS{x\$}$ and $(s,x\$) \in
{\mathcal C}$ implies that $( \semT{s}{\emptyset}, \semS{x\$}
)$ belongs to the above subtyping relation ${\mathcal R}$.  \qed

\medskip As a corollary, we have that the asynchronous \selection
relation is undecidable. 
\begin{corollary}\label{cor:undecidable}
  Asynchronous \selection subtyping for binary session types
  $\selsubtype$ is undecidable.
\end{corollary}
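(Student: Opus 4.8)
The plan is to obtain undecidability of $\selsubtype$ as a direct many-one reduction from the acceptance problem for queue machines, with Theorem~\ref{thm} serving as the reduction. As recalled earlier (following Kozen~\cite{KozenBook}), queue machines are Turing complete, so the problem of deciding whether a given queue machine $M$ accepts a given input string $x$ is undecidable. The corollary will follow by transporting this undecidability across the encoding.

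First I would observe that the construction underlying Theorem~\ref{thm} is effective. Given $M$ and $x$, the finite control encoding $\semT{s}{\emptyset}$ of Definition~\ref{def:controlEncoding} and the queue encoding $\semS{x\$}$ of Definition~\ref{def:queueenc} are both computable by simple structural recursion over the (finite) description of $M$ and the (finite) string $x$, yielding a pair of types $(T,S)$ with $T \in T^{\textsf{out}}$ and $S \in T^{\textsf{in}}$. This exhibits a total computable map $(M,x) \mapsto (T,S)$.

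Next I would invoke Theorem~\ref{thm}, which asserts that $M$ accepts $x$ if and only if $T \not\!\!\!\selsubtype S$; equivalently, $M$ does \emph{not} accept $x$ if and only if $T \selsubtype S$. Hence the constructed instance $(T,S)$ belongs to the relation $\selsubtype$ precisely when $M$ rejects $x$, so membership in $\selsubtype$ encodes (the complement of) acceptance.

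Finally, I would close the argument by contradiction. If $\selsubtype$ were decidable, then composing a decision procedure for it with the computable encoding above would decide whether $T \selsubtype S$ for the generated types, and therefore whether $M$ accepts $x$, contradicting the undecidability of the acceptance problem for queue machines. I do not expect any genuine obstacle here: the only point to confirm is the effectiveness of the two encodings, which is immediate from their inductive definitions, and all of the mathematical content has already been discharged in Lemma~\ref{lemmaOnlyIf} and Theorem~\ref{thm}.
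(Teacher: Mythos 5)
Your proposal is correct and follows essentially the same route as the paper: both invoke Theorem~\ref{thm} as a many-one reduction from the (undecidable) acceptance problem for queue machines, with your explicit remark on the effectiveness of the encodings $\semT{s}{\emptyset}$ and $\semS{x\$}$ being a point the paper leaves implicit. No gap to report.
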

\begin{proof}
  From Theorem \ref{thm} we know that 
  given any queue machine $M= (Q , \Sigma , \Gamma , \$ , s , \delta )$,
  an input string $x$, and the two types $T=\semT{s}{\emptyset}$ and
  $S=\semS{x\$}$, we have that $M$ does not accept $x$ if and only if
  $T \, \selsubtype \, S$. From the undecidability of acceptance for queue 
  machines we can conclude  the undecidability of $\selsubtype$.
\qed
\end{proof}

%

\section{Impact of the Undecidability Result}
\label{subsec:impact}
Starting from our core undecidability result
we prove the undecidability of other subtyping relations
starting from $\subtype$.
In the
following subsections, for the sake of simplicity, we denote different
session type languages with the same letters $T,S, \dots$ and the
actual language will be made clear by the context.
%

\subsection{Undecidability of Asynchronous Subtyping}
The undecidability of the asynchronous \selection relation can be
exploited to show that asynchronous subtyping is undecidable.
Intuitively, this follows by the fact that the encoding of a queue
machine into $\selsubtype$ is also a valid encoding into $\subtype$.
\begin{corollary}
  Asynchronous subtyping for binary session types $\subtype$ is
  undecidable.
\end{corollary}
\begin{proof}
  A direct consequence of the undecidability of $\selsubtype$,
  i.e. $\subtype$ restricted to the pairs of types belonging to
  $(T^{\textsf{out}} \times T^{\textsf{in}}) \cap (\noinf \times \noinf)$.
%
\qed
\end{proof}

\subsection{Standard Binary Session Types with Asynchronous Subtyping}
The syntax proposed in Definition~\ref{def:sessiontypes} allows for
the definition of session types with only output selections and input branchings, and recursion. Standard binary session types~\cite{HVK98}
also feature classic send/receive types, containing the type of
the communicated message, dubbed {\em carried type}. Carried types can
be primitive types such as $\mathsf{bool}$, $\mathsf{nat}$, $\ldots$
or a session type $T$ (modelling delegation).  We define standard
binary session types as follows:
\begin{definition}[Standard Session Types]\label{pluto}
  Standard binary session types are defined as
  \begin{displaymath}
    \begin{array}{lrl}
      T &::=&   \ldots\text{as in Definition~\ref{def:sessiontypes}}\ldots\ \mid\
                \Tout U. T\
                \mid\ \Tin U. T
      \\
      U & ::= & T\mid \mathsf{bool}\mid\mathsf{nat}\mid\ldots 
  \end{array}
\end{displaymath}
\end{definition}

In order to extend subtyping to standard binary session types, we need
to adapt the notion of unfolding. The $n$-unfolding function is
extended as follows:
  \begin{displaymath}
    \begin{array}{l@{\qquad}l}
      \ldots\text{as in Definition~\ref{def:unfolding}}\ldots
      \\
      \unfold 1{\Tout U. T} = \; \Tout U. \unfold 1{T}
      \\
      \unfold 1{\Tin U. T } = \; \Tin U. \unfold 1{T }
    \end{array}
  \end{displaymath}
Moreover, the input context definition becomes:
  \[
  \mathcal A\ ::=\ \ldots\text{as in Definition~\ref{def:context}}\ldots\
  \quad\mid\quad \Tin U.\mathcal A
  \]

  Finally, the asynchronous subtyping relation for standard binary
  session types is given by the following:

  \begin{definition}\label{gigiociao}
    {\bf (Asynchronous Subtyping for Standard Session Types,} {\bf
      $\subtype_s$)} Asynchronous subtyping for standard session
    types is defined as in Definition~\ref{def:subtyping} where we 
    consider a relation $\mathcal R$ on both session and primitive 
    types that satisfies, besides the items in that definition, 
    also the following ones 
    (with an abuse of notation, we use $T$ and $S$ to range over
    both session and primitive types):
        
      \begin{itemize}
  

  \item[5.] if $T\in\{\mathsf{bool}\mid\mathsf{nat}\mid\ldots \}$ then
    $S=T$;


  \item[6.] if $T=\; \Tout U. T'$ then $\exists n\geq 0,\mathcal A$ such that
    \begin{itemize}
    \item
      $\unfold nS = \context A{\Tout {V_k}. S_k}{k\in \{1,\ldots,
        m\}}$,
    \item $\forall k\in\{1,\ldots, m\}. (V_k,U)\in\mathcal R$ and
    \item $(T,\context A {S_k}{k\in\{1,\ldots, m\}})\in\mathcal R$;
    \end{itemize}

  \item[7.] if $T=\;\Tin U. T'$ then $\exists n\geq 0$ such that
    $\unfold nS = \;\Tin V.S'$ and $(U,V)\in\mathcal R$ and
    $(T',S')\in\mathcal R$.




  \end{itemize}
\end{definition}

The undecidability proof applies to this extended
setting:
\begin{corollary}
  Asynchronous subtyping $\subtype_s$ for standard binary session types is
  undecidable.
\end{corollary}
\proof The core session type language in Definition~\ref{def:sessiontypes}
on which $\subtype$ is defined
is a fragment of the standard session type language
defined in the present section.  The thesis follows from the fact that
on this fragment the two subtyping relations $\subtype$ and
$\subtype_s$ coincide.  \qed

\paragraph{Remark}
It is worth to observe that the asynchronous subtyping relation
$\subtype_s$ defined above corresponds to the one by Mostrous and
Yoshida~\cite{IandCmostrous}, with the only difference that the
various rules for the different kinds of carried types considered in
that paper are replaced by the simpler rule {\em 5.}, that considers
primitive types only (carried session types are managed by the rules
{\em 6.} and {\em 7.}).  Hence our undecidability result applies also
to the subtyping by Mostrous and
Yoshida~\cite{IandCmostrous}.

\subsection{Carried Types in Selection/Branching and no Orphan
  Messages}
Chen et al.~\cite{MariangiolaPreciness} propose a variant of standard
binary session types where messages of some type can also be
communicated together with a choice performed within an output selection/input branching. 
This is defined by the following syntax.
\begin{definition}[Session types with carried types on choices
  \cite{MariangiolaPreciness}]\label{pippo}
  \begin{displaymath}
    \begin{array}{lrl}
      T &::=&   
              \TselectM{l}{T}{U}
              \mid \TbranchM{l}{T}{U} \mid
              \Trec t.T
              \mid \Tvar t
              \mid \Tend
      \\
      U & ::= & T\mid \mathsf{bool}\mid\mathsf{nat}\mid\ldots 
  \end{array}
\end{displaymath}
\end{definition}
This syntax corresponds with the one we have considered
in Definition \ref{def:sessiontypes}, where $\Tselect{l}{T}$ and $\Tbranch{l}{T}$
are replaced by $\TselectM{l}{T}{U}$ and $\TbranchM{l}{T}{U}$, respectively.
The definition of $n$-unfolding can be updated accordingly.

Chen et al.~\cite{MariangiolaPreciness} also propose a different
definition of subtyping that does not allow to have orphan messages,
i.e., inputs on the right-hand side of the subtyping relation cannot
be indefinitely delayed. We reformulate the subtyping
by Chen et al.~\cite{MariangiolaPreciness} (defined on infinite trees) as follows.
\begin{definition}[Asynchronous Orphan-Message-Free Subtyping, $\subtype_o$]
\label{def:relOrphFree}
Asynchronous subtyping for orphan-message-free session
    types is defined as in Definition~\ref{def:subtyping} where we 
    consider a relation $\mathcal R$ on both session and primitive 
    types that satisfies the items in that definition, with 
    items 2. and 3. replaced by the following
    corresponding ones, plus the additional rule 5.
    (also in this case, we use $T$ and $S$ to range over
    both session and primitive types):
  \begin{itemize}
  \item[2.] if $T=\TselectM{l}{T}{U}$ then
    $\exists n\geq 0,\mathcal A$ such that
    \begin{itemize}
    \item $\exists j\in I$ s.t. $T_j$ does not contain input branchings implies $\mathcal A=[\,]^1$,
    \item
      $\unfold nS = \context {A} {\TselectindexM
        l{S_k}{j}{J_k}{{V_k}}} {k\in \{1,\ldots, m\}}$,
    \item $\forall k\in\{1,\ldots, m\}.  I\subseteq J_k$ and
    \item
      $\forall i\in I,k\in\{1,\ldots, m\}. ({V_k}_i,U_{i})\in\mathcal
      R$;
    \item
      $\forall i\in I. (T_i,\context A {S_{ki}}{k\in\{1,\ldots, m\}})\in\mathcal R$;
    \end{itemize}

  \item[3.] if $T=\TbranchM{l}{T}{U}$ then $\exists n\geq 0$ such that\\
    $\unfold nS =$ \mbox{$\TbranchindexM lSjJV$}, $J\subseteq I$ and
    $\forall j\in J. (U_j,V_j),(T_j,S_j)\in\mathcal R$;
    
  \item[5.] if $T\in\{\mathsf{bool}\mid\mathsf{nat}\mid\ldots \}$ then
    $S=T$.
  \end{itemize}
\end{definition}

The key point in the definition above is rule {\em 2.}, first item,
that guarantees that if the r.h.s. does not start with the output 
needed to be mimicked in the simulation game, and then such
output must be anticipated, then all possible continuations in the
l.h.s. must contain at list an input. This implies that the
input in the r.h.s. that have been delayed due to the
anticipation, will be eventually involved in the simulation game,
i.e. they will be not delayed indefinitely.
It is worth to notice that this is guaranteed already by the
core relation $\selsubtype$ because it avoids processes
from having consecutive infinite outputs.

Also in this case the undecidability proof applies to this extended
setting:
\begin{corollary}
  Asynchronous subtyping $\subtype_o$ for binary session types with
  carried types in output selections/input branchings and asynchronous
  orphan-message-free subtyping is undecidable.
\end{corollary}
\proof First of all, we observe that our core session language has a
one-to-one correspondence with a fragment of the language given in
Definition~\ref{pippo}, under the assumption that only one given
primitive type can be carried (e.g. $\mathsf{bool}$).
Then, we observe (as already remarked above) that if we consider
terms without consecutive infinite outputs 
the new  additional first item 
of rule {\em 2.} in Definition \ref{def:relOrphFree} 
can be omitted without changing the defined relation.
In fact, we show that it is implied by the other conditions.
If during the simulation game
an output is anticipated w.r.t. some inputs in the r.h.s. 
(i.e. we use $A\neq[\,]^1$ in the application of rule {\em 2.})
then the continuations
of the simulation game could be either finite or infinite. For the finite
continuations we have that the inputs in front of the r.h.s. must be eventually 
consumed otherwise the pair $(\Tend,\Tend)$ cannot be reached; hence at least 
one input should be present in the l.h.s.
In the infinite continuations, the fact that the l.h.s. has no consecutive infinite outputs
guarantees the presence in such term of at least one input.

Hence we can conclude that $\selsubtype$, which is defined on terms
belonging to $\noinf$, is isomorphic to $\subtype_o$ restricted to 
terms without consecutive infinite output.
The undecidability of $\subtype_o$ thus directly follows from the undecidability of $\selsubtype$.
%
%
%
%
%
%
%
%
\qed

\paragraph{Remark} It is immediate to conclude that also the
subtyping relation by Chen et al.~\cite{MariangiolaPreciness} is
undecidable.

\subsection{Multiparty session types} \label{sec:multiparty}
We now investigate how our undecidability result can be applied to a
version of multiparty session types given by Mostrous et
al.~\cite{ESOP09}. Multiparty session types are an extension of binary
session types that allow to describe protocols between several
parties. Protocols, specified as {\em global types}~\cite{HYC16}, can
then be projected into {\em local types}, formally defined as follows.
\begin{definition}[Local Types]
  \begin{displaymath}
    \begin{array}{lrl}
      T &::=&   
              \ToutK U. T
              \mid 
              \TinK U. T
              \mid
              \TselectK{l}{T}
              \mid
              \TbranchK{l}{T}
              \mid 
              \Trec t.T
              \mid 
              \Tvar t
              \mid
              \Tend
      \\
      U & ::= & T\mid \mathsf{bool}\mid\mathsf{nat}\mid\ldots 
  \end{array}
\end{displaymath}
\end{definition}
Local types are a generalisation of the standard binary session types
seen in Definition~\ref{pluto}, where communications can now be
performed on different channels, e.g., a process involved in a session
with type $\ToutK U. k'?({U'}). T $ first outputs something of type
$U$ on channel $k$, and, then, inputs something of type $U'$ from
channel $k'$.

Before introducing our definition of subtyping for local types, we
note that the definition of $n$-unfolding can be trivially adapted to
terms in the definition above from our initial definition. Moreover,
we need to redefine input contexts for local types: such contexts 
now contain also outputs, under the assumption that those outputs
are on different channels. This reflects the fact that ordering is guaranteed
to be preserved only by messages sent on the same channel.
Technically, we use a parameterized notion of input context $\mathcal A^k$
where $k$ is assumed to be the channel of the output to be anticipated.

\begin{definition}[Multiparty input context]
  A multiparty input context $\mathcal A^k$ is a session type with
  multiple holes defined by the following syntax:
  \[
  \begin{array}{lll}
    \!\!\mathcal A^k\! ::= [\,]^n \ \mid\ k'\Tin U.\mathcal A^k \ \mid\
                       k'\Tbranch{l}{\mathcal A^k}\ \mid \ k''\Tout
                       U.\mathcal A^k\ \mid\ k''\Tselect{l}{\mathcal A^k}
  \end{array}
  \]
  where we assume that $k'$ is any possible channel while 
  $k'' \neq k$.
\end{definition}

We are now ready to define the subtyping relation for local types. The
definition of subtyping we propose is inspired by the one initially
proposed by Mostrous et al.~\cite{ESOP09}.  Unlike the binary case,
outputs on a channel can be anticipated over inputs and outputs on
different channels. Moreover, unlike Mostrous et al.~\cite{ESOP09}, we
allow outputs to be anticipated over inputs on the same channel, and
we do not allow inputs over different channels to be swapped. The
former point carries the same intuition as the output anticipation for
the binary case. The latter is a restriction that guarantees that
subtyping preserves the ordering of observable events (input actions)
given by the corresponding global type specification of the protocol.
The definition of asynchronous subtyping is then given as follows:

\begin{definition}[Multiparty Asynchronous Subtyping, $\subtype_m$]
\mbox{}\\
Asynchronous subtyping for multiparty session
    types is defined as in Definition~\ref{def:subtyping} where we 
    consider a relation $\mathcal R$ on both session and primitive 
    types that satisfies the items in that definition, with 
    items 2. and 3. replaced by the following
    corresponding ones, plus the additional items 5.--7.
    (also in this case, we use $T$ and $S$ to range over
    both session and primitive types):

  Asyncronous subtyping for multiparty session types is defined as in
  Definition~\ref{gigiociao}, by replacing the items about selection,
  branching, output and input with:

      \begin{itemize}
  
  \item[2.]
  
    if $T=\TselectK{l}{T}$ then $\exists n\geq 0,\mathcal A^k$ such
    that
    \begin{itemize}
    \item
      $\unfold nS = \context {\text{$\mathcal{A}^k$}} {k\Tselectindex l{S_h}{j}{J_h}} {h\in
        \{1,\ldots, m\}}$,
    \item $\forall h\in\{1,\ldots, m\}.  I\subseteq J_h$ and
    \item
      $\forall i\in I. (T_i,\context {A{\mbox{$^k$}}}{S_{hi}}{h\in\{1,\ldots, m\}})\in\mathcal R$;
    \end{itemize}

  \item[3.]
  
    if $T=\TbranchK{l}{T}$ then $\exists n\geq 0$ such that
    $\unfold nS = k\Tbranchindex lSjJ$, $J\subseteq I$ and
    $\forall j\in J. (T_j,S_j)\in\mathcal R$;

  \item[5.] if $T\in\{\mathsf{bool}\mid\mathsf{nat}\mid\ldots \}$ then
    $S=T$;

  \item[6.] if $T=\ToutK U. T'$ then $\exists n\geq 0,\mathcal A^k$
    such that
    \begin{itemize}
    \item $\unfold nS = \context {\text{$\mathcal{A}^k$}}{\ToutK
        {V_h}. S_h}{h\in \{1,\ldots, m\}}$,
    \item $\forall h\in\{1,\ldots, m\}. (V_h,U)\in\mathcal R$ and
    \item $ (T,\context {A{\mbox{$^k$}}}{S_h}{h\in\{1,\ldots, m\}})\in\mathcal R$;
    \end{itemize}
        
  \item[7.] if $T=\TinK U. T'$ then $\exists n\geq 0$ such that
    $\unfold nS = \TinK V.S'$ and $(U,V)\in\mathcal R$ and
    $(T',S')\in\mathcal R$.
        
  \end{itemize}

\end{definition}


As for the other cases, subtyping is undecidable:
\begin{corollary}
  Asynchronous subtyping for multiparty session types is undecidable.
\end{corollary}
\proof Similarly to the case for session types with carried types in
branching/selection, standard session types defined in
Definition~\ref{pluto} have a one-to-one correspondence with
a fragment of local types where only one single channel, e.g., $k$, is
used. Then, the thesis follows from the fact that on such fragment the
two subtyping relations $\subtype_s$ and $\subtype_m$ are isomorphic.
\qed

\paragraph{Remark} The proof of the previous Lemma does not directly
work for the definition of asynchronous subtyping used by Mostrous et
al.~\cite{ESOP09}. This is because their subtyping relation does not
allow to anticipate outputs over inputs on the same channel, e.g.,
$\ToutK U. k?({U'}). T $ is not a subtype of
$ k?({U'}). \ToutK U. T $. However, output anticipation is possible
over inputs on different channels, e.g., $\ToutK U. k'?({U'}). T$ is a
subtype of $k'?({U'}). \ToutK U. T$, assuming $k \neq k'$.  The
subtyping algorithm proposed by Mostrous et al.~\cite{ESOP09}
correctly checks cases like the above two examples, but fails to
terminate when there is an unbounded accumulation of inputs as in the
first example that we have discussed in \S\ref{sec:examples}.
Rephrasing that example in the syntax of local types, we have that
$T=\Trec t.k'?({U'}).\ToutK U.\Tvar t$ and
$S=\Trec t.k'?({U'}).k'?({U'}).\ToutK U.\Tvar t$ are in subtyping
relation, even for Mostrous et al.~\cite{ESOP09}.  Nevertheless, the
algorithm proposed in that paper does not terminate because, in this
case, it is expected to check infinitely many different pairs
$(T,k'?({U'}).S)$, $(T,k'?({U'}).k'?({U'}).S)$, $\dots$.
%
In the light of our undecidability result, we can even conclude the
impossibility to check algorithmically the subtyping relation by
Mostrous et al.~\cite{ESOP09}.  Consider the proof of our
Theorem~\ref{thm}: given a queue machine, we can change both encodings
of its finite control and its queue so that all inputs are on some
special channel $k$ and all outputs are on some special channel $k'$,
with $k\neq k'$. As discussed above, outputs on $k'$ can be
anticipated w.r.t. inputs on a different channel $k$, hence the two
encodings will be in a subtyping relation, also for the subtyping
by Mostrous et al.~\cite{ESOP09}, if and only if the encoded machine does not
terminate.

\subsection{Communicating automata}

A Communicating Finite State Machine (CFSM) \cite{BZ83}, or more simply a communicating automaton, is defined as a finite automaton $(Q,q_0,\Sigma, \delta)$, where 
\begin{itemize}
\item $Q$ is a finite set of states 
\item $q_0 \in Q$ is the initial state
\item $\Sigma$ is a finite alphabet, and
\item $\delta \subseteq Q \times \Sigma \times \{!,?\} \times Q$ is a transition set.
\end{itemize}
We use ``$?$'' to represent inputs and ``$!$'' to represent outputs
(in CFSMs \cite{BZ83} ``$+$'' and ``$-$'', respectively, are used, instead).

A CFSM is a 
labeled transition system 
that can be employed to graphically represent a session type (see, e.g., Figures \ref{fig:queue} and \ref{fig:control}). 
Note that in general a CFSM may express more
behaviours than the ones described by session types: 
it can include non-deterministic and mixed choices, i.e.\ choices including both inputs and outputs.

Let $\calt$ be the set of all session types $T$ and $L$ the alphabet
of session types, we define a transition relation
$\longrightarrow \; \subseteq \calt \times L \times \{!,?\} \times
\calt$, as the least transition set satisfying the following rules 
$$\begin{array}{c}
\Tselect{l}{T} \arrow{l_i !} T_i \hspace{.5cm} i \in I 
\\[.2cm]
 \Tbranch{l}{T} \arrow{l_i ?} T_i \hspace{.5cm} i \in I 
\\[.2cm]
\infr{ T\{\Trec t.{T}/\Tvar{t}\} \arrow{\alpha} T'}{\Trec t.{T} \arrow{\alpha} T' } 
\end{array}
$$
with $\alpha$ ranging over $ L \times \{!,?\}$.

Given a session type $T$ we define $CFSM(T)$ as being the communicating automaton $(Q_T,T,L, \delta_T)$, where:
$L$ is the alphabet of session types, $Q_T$ is the set of terms $T'$ which are reachable from $T$ according to $\longrightarrow$ relation and $\delta_T$ is defined as the restriction of $\longrightarrow$ to  $Q_T \times L \times \{!,?\} \times Q_T$. For example Figure \ref{fig:queue} depicts $CFSM(S)$ with $S$ being the session type defined in
Definition \ref{def:queueenc} (assuming $\Gamma = \{ A_i \mid i \leq k \}$).

We, thus, get the following result as a consequence of undecidability of $\subtype$. Any relation $\preceq$ 
over communicating automata (usually called {\it refinement} relation in this context) that is such that 
$CFSM(T) \preceq CFSM(T')$ if and only if $T \, \subtype \, T'$, i.e.\ it reduces to our subtyping definition for the subclass of communicating automata not including non-deterministic and mixed choices, is undecidable.

%

\section{Decidable Fragments of Aynchronous \Selection Relation}
\label{sec:decidable}
We now show that we cannot further reduce (w.r.t.\ branching/selection structure) the core
undecidable fragment: if we consider single-output selection only or
single-input branching only, we obtain a decidable relation.

\begin{definition}[Asynchronous \Selection Output Relation]\label{def:outrel} \mbox{}\\
  The asynchronous \selection output 
  relation
  $\selsubtype_{\mathsf{sout}}$ is defined as:
  \[ \selsubtype_{\mathsf{sout}}\ =\  \selsubtype\cap (T^{\textsf{out}} \times T^{\textsf{out}})\]
\end{definition}

\begin{definition}[Asynchronous \Selection Input Relation]\label{def:inrel} \mbox{}\\
  The asynchronous \selection input 
  relation
  $\selsubtype_{\mathsf{sin}}$ is defined as:
  \[ \selsubtype_{\mathsf{sin}}\ =\  \selsubtype\cap (T^{\textsf{in}} \times T^{\textsf{in}})\]
\end{definition}

  As a matter of fact, we prove decidability for larger relations w.r.t. $\selsubtype_\mathsf{sin}$
  and $\selsubtype_\mathsf{sout}$ where we do not impose the constraint about
  no consecutive infinite outputs.

In order to define an algorithm for deciding the two relations above,
we first adapt to our setting the procedure defined in Mostrous et. al~\cite{ESOP09} 
and then improve it to precisely characterize the two relations.
The initial procedure is defined for the unrestricted syntax of session
types, i.e. the subtyping $\subtype$; while the improved version assumes 
to work on types restricted according to the single-choice assumptions, 
i.e. the two new relations $\selsubtype_{\mathsf{sin}}$ and $\selsubtype_{\mathsf{sout}}$.
Actually, in order to have a more general decidability result,
 we show that it is not necessary to consider the constraint about
  no consecutive infinite outputs.

The procedure is defined by the rules reported in
Figure~\ref{fig:algo}.
\begin{figure}[t]
  \begin{displaymath}
    \begin{array}{cccc}
      \infer[\textsf{Asmp}]
      {
      \Sigma, (T,S) \vdash T\subtypea S
      }
      {
      }
      &
        \infer[\textsf{End}]
        {
        \Sigma \vdash \Tend\subtypea\Tend
        }
        {
        }
      \\\\
      \infer[\textsf{Out}]
      {
      \Sigma \vdash \Tselect{l}{T} \subtypea \context A{\Tselectindex{l}{S_n}j{J_n}}{n}
      }
      {
      \forall n.I\subseteq J_n & \forall i\in I\,.\,\Sigma \vdash T_i\subtypea \context A {S_{ni}}{n} 
          }
        &
          \infer[\textsf{In}]
          {
          \Sigma \vdash \Tbranch{l}{T} \subtypea \Tbranchindex{l}{S}jJ
          }
          {
          J\subseteq I & \forall j\in J\,.\,\Sigma \vdash T_j\subtypea S_j
                         }
      \\\\
      \multicolumn{2}{c}{
      \infer[\textsf{RecL}]
      {
      \Sigma \vdash \Trec t.T \subtypea S
      }
      {
      \Sigma,(\Trec t.T,S) \vdash \unfold 1{\Trec t.T} \subtypea S
      }
      }
      \\\\
      \multicolumn{2}{c}{
      \infer[\textsf{RecR}_1]
      {
      \Sigma \vdash T\subtypea \Trec t.S
      }
      {
      T=\Tend \vee T=\Tbranch lT & \Sigma, (T,\Trec t.S) \vdash T\subtypea \unfold 1{\Trec t.S}
                                   }
                                   }
      \\\\
      \multicolumn{2}{c}{
      \infer[\textsf{RecR}_2]
      {
      \Sigma \vdash \Tselect lT\subtypea S
      }
      {
      n=\mathsf{depth}(S,\emptyset)
      & n \geq 1  &
                    \Sigma, (\Tselect lT,S) \vdash \Tselect lT\subtypea \unfold nS
                    }
                    }
    \end{array}
  \end{displaymath}
  \caption{A Procedure for Checking Subtyping}
  \label{fig:algo} 
\end{figure}
In the rules, the environment $\Sigma$ is a set of pairs $(T,S)$ used
to keep track of previously visited pairs of types. The procedure
successfully terminates either by applying rule \textsf{Asmp} (which
has priority over the other rules) or by applying rule
\textsf{End}. In the former case, a previously visited pair is being
visited again, and therefore, there is no need to proceed further.
Rules \textsf{In} and \textsf{Out} are straightforward. The procedure
always unfolds on the left-hand side when necessary (rule
\textsf{RecL}). If this is not the case, but it is necessary to unfold
on the right-hand side, it is possible to apply either
$\textsf{RecR}_1$ or $\textsf{RecR}_2$, depending on whether the
left-hand side type is an input (or an $\Tend$) or an output,
respectively. In the first case, a single unfolding is
sufficient. However, we may need to unfold several times in the case
we need an output. The partial function $\depth$
(we write $\depth(S)= \; \perp$ if $\depth$ is undefined on $S$)
 measures the number of
unfoldings necessary for anticipating such output. The function is
inductively defined as:
\begin{displaymath}
  \begin{array}{cc}
    \depth(\Tend,\Gamma) = \; \perp \qquad \qquad \depth(\Tselect
    lT,\Gamma) = 0\\
    \depth(\Tbranch lT,\Gamma) = 
    \mathsf{max}\{\depth(T_i,\Gamma)\ |\ i\in I\} 
    \\
    \depth(\Trec t.T,\Gamma) = \left\{
      \begin{array}{ll}
        \perp & 
                  \text{if } \Tvar t\in\Gamma 
        \\
        1+\depth(T\{\Trec t.T/\Tvar t\}, \Gamma+\{\Tvar t\}) & \text{otherwise}
      \end{array}
      \right.
\end{array}
\end{displaymath}
In the definition of $\depth$, we assume that
$\mathsf{max}\{\depth(T_i,\Gamma)\ |\ i\in I\}=\;\perp$ if
$\depth(T_i,\Gamma)=\;\perp$ for some $i\in I$. Similarly,
$1 \,+\perp \;= \;\perp$.

The subtyping procedure, when it has to check whether $T \leq S$,
applies the rules from bottom to top starting from the judgement
$\emptyset \vdash T \subtypea S$.  We write
$\Sigma \vdash T \subtypea S \rderiv \Sigma' \vdash T' \subtypea S'$
if $\Sigma \vdash T \subtypea S$ matches the consequences of one of
the rules, and $\Sigma' \vdash T' \subtypea S'$ is produced by the
corresponding premises.
$\Sigma \vdash T \subtypea S \rderiv^* \Sigma' \vdash T' \subtypea S'$
is the reflexive and transitive closure of such relation.  We write
$\Sigma \vdash T \subtypea S \notrderiv$ to mean that no rule can be
applied to the judgement $\Sigma \vdash T \subtypea S$.  We give priority to the application of
the rule $\textsf{Asmp}$ to have a deterministic procedure: this is
sufficient because all the other rules are alternative, i.e., given a
judgement $\Sigma \vdash T \subtypea S$ there are no two rules that
can be both applied.

We now prove that the above procedure
is a semi-algorithm for checking whether two types are not 
in subtyping relation.

\begin{lemma}\label{lem:semidecidable}
  Given the types $T$ and $S$ we have
  $\exists \Sigma', T', S' \ldotp \; \emptyset \vdash T \subtypea S
  \rderiv^* \Sigma' \vdash T' \subtypea S' \notrderiv$
  if and only if $T \not\!\!\subtype \, S$.
\end{lemma}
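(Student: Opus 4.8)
The plan is to prove both implications by contraposition, exploiting the tight correspondence between the six rules of Figure~\ref{fig:algo} and the four clauses of Definition~\ref{def:subtyping}. First I would observe that, thanks to the priority of \textsf{Asmp} and the fact that the remaining rules are mutually exclusive (and fix their own parameters $A$, $J$, $n$), the procedure started from $\emptyset \vdash T \subtypea S$ generates a \emph{unique} derivation tree: the only nondeterminism in $\rderiv$ is the choice of which premise of a branching rule (\textsf{In} or \textsf{Out}) to follow. Consequently the left-hand side of the statement says precisely that some root-to-leaf branch of this tree ends in a stuck judgement $\Sigma' \vdash T' \subtypea S' \notrderiv$ (a genuine failure), as opposed to ending in a premise-free success leaf \textsf{End} or \textsf{Asmp}, which are \emph{not} $\notrderiv$. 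Thus I must show that the tree has a stuck leaf iff $T \not\subtype S$.

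For the forward implication I would argue the contrapositive, $T \subtype S \Rightarrow$ no stuck leaf. Taking $\subtype$ itself as a witnessing subtyping relation (it is the greatest one, hence satisfies the closure conditions of Definition~\ref{def:subtyping}), I would prove by induction on the length of a branch prefix the invariant that every judgement $\Sigma' \vdash T' \subtypea S'$ reachable from $\emptyset \vdash T \subtypea S$ satisfies $T' \subtype S'$. The inductive step is a case analysis on the applied rule: whenever the conclusion pair is a subtype, each premise pair is again a subtype (clause~2 for \textsf{Out}, clause~3 for \textsf{In}, clause~4 for \textsf{RecL}, and invariance of $\subtype$ under unfolding of the right-hand side for $\textsf{RecR}_1$ and $\textsf{RecR}_2$). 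It then remains to check that a subtyping pair is never stuck: if $(T',S') \in \Sigma'$ then \textsf{Asmp} fires, and otherwise a case on the head of $T'$ — using the matching clause to constrain the shape of $S'$ (e.g.\ if $T'$ is an output then by clause~2 some $\unfold n{S'}$ is an input context filled with selections, so either \textsf{Out} applies directly or $\depth(S')\geq 1$ and $\textsf{RecR}_2$ applies) — shows that an applicable rule always exists.

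For the converse I would argue the other contrapositive, no stuck leaf $\Rightarrow T \subtype S$, by exhibiting the relation $\mathcal R$ consisting of all pairs $(T',S')$ occurring as the subject of some judgement in the tree and verifying that $\mathcal R$ is a subtyping relation; since $(T,S)\in\mathcal R$ this yields $T \subtype S$. For each pair I would read off the required clause from the rule applied to it: \textsf{In}, \textsf{Out} and \textsf{End} give clauses~3,~2 and~1 with $n=0$ directly, and \textsf{RecL} gives clause~4 directly. For $\textsf{RecR}_1$ I would chain its single right-hand unfoldings until an \textsf{In} or \textsf{End} step fires, the number of unfoldings giving the witness $n$ of clause~3 or~1; for $\textsf{RecR}_2$ the single $\depth(S')$-fold unfolding is immediately followed by an \textsf{Out} step, and $\depth(S')$ is the witness $n$ of clause~2. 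An \textsf{Asmp} leaf $(T',S')$ is handled by noting that $\Sigma'$ only ever accumulates pairs that were subjects of some \textsf{Rec} rule, so $(T',S')$ coincides with a proper ancestor in the tree whose clause-witnesses already belong to $\mathcal R$ and hence serve for the leaf as well.

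The main obstacle I expect is the correctness of $\depth$ and rule $\textsf{RecR}_2$: proving that $\depth(S')$ is defined exactly when the leading output of $T'$ can be anticipated and that $\unfold{\depth(S')}{S'}$ is precisely an input context filled with selections, so that the procedure's \emph{incremental} right-hand unfolding realises the single quantifier ``$\exists n$'' of clause~2. Establishing this requires a small lemma relating $\depth$ to the (unique) maximal input-context decomposition of a type and using contractivity so the unfolding chains terminate. A secondary, more bookkeeping difficulty is the treatment of the environment $\Sigma$ and \textsf{Asmp}, where the observation that $\Sigma$ contains only \textsf{Rec}-rule subjects is used to guarantee that the memoisation never closes a cycle prematurely. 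Finally, because the procedure need not terminate when $T \subtype S$ (the right-hand side may accumulate unboundedly many inputs, so the tree can have infinite branches), no global well-founded measure is available; the construction of $\mathcal R$ sidesteps this, since the clauses of Definition~\ref{def:subtyping} are local (they refer only to a pair and its immediate successors) and therefore remain witnessed even along infinite branches.
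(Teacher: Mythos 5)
Your proposal is correct and takes essentially the same route as the paper: for one direction you build a subtyping relation out of all pairs reachable by the procedure (handling \textsf{Asmp} by tracing a repeated pair back to the ancestor judgement that first placed it in $\Sigma$, and terminating $\textsf{RecR}_1$ chains via contractivity, which the paper phrases as induction on the number $\mathsf{nrec}(S')$ of unguarded recursions), and for the other you maintain a subtyping invariant along the run and show that subtyping pairs always admit an applicable rule, just as the paper does with its $\rderiv_w$ decomposition. Your only deviation---carrying the invariant $T'\subtype S'$ with $\subtype$ itself, closed under right-hand unfolding, instead of the paper's bookkeeping of $\exists n\ldotp(T',\unfold{n}{S'})\in\mathcal R$---is a cosmetic variant, and the obstacle you flag about $\depth$ and $\textsf{RecR}_2$ (that $\depth(S')$ is defined exactly when clause~2 applies, with $\unfold{\depth(S',\emptyset)}{S'}$ an input context filled with selections) is genuine: the paper uses this fact implicitly without proving it.
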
 
\proof We prove the two implications separately.  We start with the
{\em if} part and proceed by contraposition.
Assume that it is not true that
\mbox{$\exists \Sigma'\!, T'\!, S' \ldotp \, \emptyset \vdash T \subtypea S$}
$\rderiv^* \Sigma' \vdash T' \subtypea S' \notrderiv$.
Consider now the relation
$${\mathcal R}=\{(T',S')\ |\ \exists \Sigma' \ldotp \emptyset \vdash T
\subtypea S \rderiv^* \Sigma' \vdash T' \subtypea S'\}$$
We show that ${\mathcal R}$ is a subtyping relation. Let
$(T',S') \in {\mathcal R}$.
Then, it is possible to apply at least one rule to
$\Sigma' \vdash T' \subtypea S'$ for the environment $\Sigma'$ such
that
$\emptyset \vdash T \subtypea S \rderiv^* \Sigma' \vdash T' \subtypea
S'$.  We proceed by cases on $T'$. In the following we use $\mathsf{nrec}(R)$
to denote the number of unguarded --not prefixed
    by some branching/selection-- occurrences of recursions $\Trec t. R'$ in
    $R$ for any $R',\Tvar t$.
  \begin{itemize}

  \item \label{item:end} If $T'=\Tend$ then item $1$ of Definition
    \ref{def:subtyping} for pair $(T',S')$ is shown by induction on
    $k=\mathsf{nrec}(S')$.
  \begin{itemize}
  \item Base case $k=0$. The only rule applicable to
    $\Sigma' \vdash T' \subtypea S'$ is $\textsf{End}$, that immediately
    yields the desired pair of ${\mathcal R}$.
  \item Induction case $k>0$. The only rules applicable to
    $\Sigma' \vdash T' \subtypea S'$ are $\textsf{Asmp}$ and
    $\textsf{RecR}_1$.  In the case of $\textsf{Asmp}$ we have that
    $(T',S') \in \Sigma'$, hence there exists $\Sigma''$ with
    $(T',S') \notin \Sigma''$ such that
    $\emptyset \vdash T \subtypea S \rderiv^* \Sigma'' \vdash T'
    \subtypea S' \rderiv^* \Sigma' \vdash T' \subtypea S'$
    and rule $\textsf{RecR}_1$ has been applied to
    $\Sigma'' \vdash T' \subtypea S'$.
    So for some $\Sigma'''$ ($=\Sigma'$ or $=\Sigma''$) we have that
    the procedure applies rule $\textsf{RecR}_1$ to
    $\Sigma''' \vdash T' \subtypea S'$. Hence
    \mbox{$\Sigma''' \vdash T' \subtypea S' \rderiv \Sigma'''' \vdash T'
    \subtypea \unfold 1{S'} $}
    . Since $\mathsf{nrec}(\unfold 1{S'})=k-1$, by induction
    hypothesis item $3$ of Definition \ref{def:subtyping} holds for
    pair $(T', \unfold 1{S'})$, hence it holds for pair $(T',S')$.
  \end{itemize}

\item \label{item:internal} If $T'=\Tselect{l}{T}$ then item $2$ of
  Definition \ref{def:subtyping} for pair $(T',S')$ is shown as
  follows:
\begin{itemize}
\item If $\mathsf{depth}(S,\emptyset)=0$ then the only rule applicable
  to $\Sigma' \vdash T' \subtypea S'$ is $\textsf{Out}$, that
  immediately yields the desired pairs of ${\mathcal R}$.
\item If $\mathsf{depth}(S,\emptyset)\geq1$ then the only rules
  applicable to $\Sigma' \vdash T' \subtypea S'$ are $\textsf{Asmp}$
  and $\textsf{RecR}_2$.  In the case of $\textsf{Asmp}$ we have that
  $(T',S') \in \Sigma'$, hence there exists $\Sigma''$ with
  $(T',S') \notin \Sigma''$ such that
  $\emptyset \vdash T \subtypea S \rderiv^* \Sigma'' \vdash T'
  \subtypea S' \rderiv^* \Sigma' \vdash T' \subtypea S'$
  and rule $\textsf{RecR}_2$ has been applied to
  $\Sigma'' \vdash T' \subtypea S'$.  So for some $\Sigma'''$
  ($=\Sigma'$ or $=\Sigma''$) we have that the procedure applies rule
  $\textsf{RecR}_2$ to $\Sigma''' \vdash T' \subtypea S'$. Hence
  \mbox{$\Sigma''' \vdash T' \subtypea S' \rderiv \Sigma'''' \vdash T'
  \subtypea \unfold k{S'} $},
  taking $k = \mathsf{depth}(S,\emptyset)$.  Since
  $\mathsf{depth}(\unfold k{S'})=0$, we end up in the previous
  case. Therefore item $3$ of Definition \ref{def:subtyping} holds for
  pair $(T', \unfold k{S'})$, hence it holds for pair $(T',S')$.

\end{itemize}

\item \label{item:external} If $T'=\Tbranch{l}{T}$ then item $3$ of
  Definition \ref{def:subtyping} for pair $(T',S')$ is shown by
  induction on $k=\mathsf{nrec}(S')$:
  \begin{itemize}
  \item Base case $k=0$. The only rule applicable to
    $\Sigma' \vdash T' \subtypea S'$ is $\textsf{In}$, that immediately
    yields the desired pairs of ${\mathcal R}$.
  \item Induction case $k>0$. The only rules applicable to
    $\Sigma' \vdash T' \subtypea S'$ are $\textsf{Asmp}$ and
    $\textsf{RecR}_1$.  In the case of $\textsf{Asmp}$ we have that
    $(T',S') \in \Sigma'$, hence there exists $\Sigma''$ with
    $(T',S') \notin \Sigma''$ such that
    $\emptyset \vdash T \subtypea S \rderiv^* \Sigma'' \vdash T'
    \subtypea S' \rderiv^* \Sigma' \vdash T' \subtypea S'$
    and rule $\textsf{RecR}_1$ has been applied to
    $\Sigma'' \vdash T' \subtypea S'$.  So for some $\Sigma'''$
    ($=\Sigma'$ or $=\Sigma''$) we have that the procedure applies
    rule $\textsf{RecR}_1$ to $\Sigma''' \vdash T' \subtypea S'$.
    Hence
    $\Sigma''' \vdash T' \subtypea S' \rderiv \Sigma'''' \vdash T'
    \subtypea \unfold 1{S'} $
    . Since $\mathsf{nrec}(\unfold 1{S'})=k-1$, by induction
    hypothesis item $3$ of Definition \ref{def:subtyping} holds for
    pair $(T', \unfold 1{S'})$, hence it holds for pair $(T',S')$.

  \end{itemize}

\item \label{item:rec} If $T'= \Trec t.{T'} $ then item $4$ of
  Definition \ref{def:subtyping} for pair $(T',S')$ holds because the
  only rule applicable to $\Sigma' \vdash T' \subtypea S'$ is
  $\textsf{RecL}$ that immediately yields the desired pair of
  ${\mathcal R}$.

  \end{itemize}

  We now prove the {\em only if} part and proceed by contraposition.
  Assume that there exists a relation ${\mathcal R}$ that is a
  subtyping relation such that $(T,S) \in {\mathcal R}$.

  We say that
  $\Sigma \vdash T \subtypea S \rderiv_w \Sigma' \vdash T' \subtypea
  S'$
  if
  $\Sigma \vdash T \subtypea S \rderiv^* \Sigma' \vdash T' \subtypea
  S'$ and: the last rule applied is one of
  $\textsf{Out}$, $\textsf{In}$ or $\textsf{RecL}$ rules; while all
  previous ones are $\textsf{RecR}_1$ or $\textsf{RecR}_2$ rules.

  We start by showing that if
  $\exists \Sigma \ldotp \emptyset \vdash T \subtypea S \rderiv_w^*
  \Sigma \vdash T' \subtypea S'$
  implies $\exists n \ldotp (T',\unfold n{S'}) \in {\mathcal R}$. The
  proof is by induction on the length of such computation $\rderiv_w^*$
  of the procedure.  The base case is for a $0$ length computation: it
  yields $(T,S) \in {\mathcal R}$ which holds.  For the inductive case
  we assume it holds for all computations of a length $k$ and we
  show it holds for all computations of length $k+1$, by considering
  all judgements $\Sigma' \vdash T'' \subtypea S''$ such that
  $\Sigma \vdash T' \subtypea S' \rderiv_w \Sigma' \vdash T''
  \subtypea S''$.
  This is shown by first considering the case in which rule
  $\textsf{Asmp}$ applies to $\Sigma \vdash T' \subtypea S'$: in this
  case there is no such a judgement and there is nothing to prove.  Then
  we consider the case in which $T'=\Tend$ and
  $\Sigma \vdash \Tend \subtypea S' \rderiv^* \Sigma''' \vdash \Tend
  \subtypea \Tend$
  (by applying $\textsf{RecR}_1$ rules) and rule $\textsf{End}$
  applies to $\Sigma''' \vdash \Tend \subtypea \Tend$. Also in this
  case there is no such a judgement $\Sigma' \vdash T'' \subtypea S''$
  and there is nothing to prove.  Finally, we proceed by an immediate
  verification that judgements $\Sigma' \vdash T'' \subtypea S''$
  produced in remaining cases are required to be in $ {\mathcal R}$ by
  items $2$, $3$ and $4$ of Definition \ref{def:subtyping}:
  $T'=\Tselect{l}{T}$ ($\rderiv_w$ is a possibly empty sequence of
  $\textsf{RecR}_2$ applications followed by $\textsf{Out}$
  application), $T'=\Tbranch{l}{T}$ ($\rderiv_w$ is a possibly empty
  sequence of $\textsf{RecR}_1$ applications followed by $\textsf{In}$
  application) or $T'= \Trec t.{T'}$ ($\rderiv_w$ is simply
  $\textsf{RecL}$ application).

  We finally observe that, considered any judgement
  $\Sigma \vdash T' \subtypea S'$ such that
  $\exists n \ldotp (T',\unfold n{S'}) \in {\mathcal R}$, we have:
  \begin{itemize}
  \item either rule $\textsf{Asmp}$ applies to
    $\Sigma \vdash T' \subtypea S'$, or

  \item $T'=\Tend$ and, by item $1$ of Definition
    \ref{def:subtyping}, there exists $\Sigma'$ such that
    $\Sigma \vdash \Tend \subtypea S' \rderiv^* \Sigma' \vdash \Tend
    \subtypea \Tend$
    (by applying $\textsf{RecR}_1$ rules) and rule $\textsf{End}$ is
    the unique rule applicable to
    $\Sigma' \vdash \Tend \subtypea \Tend$, with $\textsf{RecR}_1$ being
    the unique rule applicable to intermediate judgements, or

  \item by items $2$,$3$ and $4$ of Definition
    \ref{def:subtyping}, there exist $\Sigma', T'',S''$ such that
    $\Sigma \vdash T' \subtypea S' \rderiv_w^* \Sigma' \vdash T''
    \subtypea S''$, with each intermediate judgement
    having a unique applicable rule.
    In particular this holds for $T'=\Tselect{l}{T}$ ($\rderiv_w$ is a
    possibly empty sequence of $\textsf{RecR}_2$ applications followed
    by $\textsf{Out}$ application), $T'=\Tbranch{l}{T}$ ($\rderiv_w$
    is a possibly empty sequence of $\textsf{RecR}_1$ applications
    followed by $\textsf{In}$ application) or $T'= \Trec t.{T'}$
    ($\rderiv_w$ is simply $\textsf{RecL}$ application). 
    \qed
  \end{itemize}

%



The above procedure is not guaranteed to terminate when $T \subtype S$,
in particular in those cases in which an infinite subtyping relation
is needed to prove that they are in subtyping. For instance, this
happens for the cases discussed in the examples reported in Section 
\ref{sec:examples}. 
We now show how to amend the procedure to terminate at least in the
restricted cases discussed above.  The new algorithm is defined by the
same rules as before, plus a pair of rules presented below, where we
use judgements $\Sigma \vdash T \subtypet S$ instead of
$\Sigma \vdash T \subtypea S$.  

Moreover, the new version of the
algorithm requires to distinguish among different instances of the
same input branching.  More precisely, due to multiple
unfoldings 
we could have that the same input
choice appears more than once.
For instance, given $\Trec t. \Tbranchsingle {l}{\Tvar t}$,
if we apply unfolding twice, we obtain 
$\Tbranchsingle {l}{\Tbranchsingle {l}{\Trec t. \Tbranchsingle {l}{\Tvar t}}}$.
To distinguish different instances of the same input branching, 
we assume to have an extended syntax in which such choices are annotated:
$\Tbranchsingledec {l}{S}{\alpha}$ denotes an input annotated with a symbol
$\alpha$ taken from a countable set of annotations. Annotations in the same
term are assumed to be pairwise distinct.
In the example above, the decorated version of the unfolded term is
$\Tbranchsingledec {l}{\Tbranchsingledec {l}{\Trec t. \Tbranchsingle {l}{\Tvar t}}{\alpha'}}{\alpha}$,
for a pair of distinct annotations $\alpha$ and $\alpha'$.
As multiple unfoldings can be applied only to the r.h.s. terms
of the judgements $\Sigma \vdash T \subtypet S$, we will adopt the syntax
extended with annotations only for such terms.

Algorithmically, in order to have the guarantee that all the annotations are pairwise
distinct, we assume that they are all different in the initial term.
Namely, when we want to check
whether $T \, \selsubtype_{\mathsf{sout}} \, S$ or $T \, \selsubtype_{\mathsf{sin}} \, S$
we assume an annotation function $\decore{S}$ that annotates all the
input choices in $S$ with pairwise distinct symbols, and 
we start the algorithm from the judgement
$\emptyset \vdash T \subtypet \decore{S}$.  Moreover, every time an unfolding
is applied to the r.h.s.  of a
judgement, the algorithm annotates with fresh symbols each added input
choice.

Concerning annotations, we omit them when they are irrelevant.
For instance, the rules in Figure \ref{fig:algo} do not 
contain annotations, but are used any way to define also the
new algorithm (upon replacement of $\Sigma \vdash T \subtypea S$
with $\Sigma \vdash T \subtypet S$). The omission of the annotations
means that the rules in that figure can be
applied to any annotated judgement $\Sigma \vdash T \subtypet S$,
to obtain a new judgement $\Sigma' \vdash T' \subtypet S'$
where the new annotated term $S'$ inherits the annotations of $S$
and in case new input branchings are present in $S'$
(due to unfolding in rules $\textsf{RecR}_1$
and $\textsf{RecR}_2$) these will
be freshly annotated as discussed above.

We assume a function $\undecore{S}$ that returns the same term but
without annotations. We overload the function $\undecore{\_}$ to apply
it also to an environment $\Sigma$: $\undecore{\Sigma}$ is the
environment obtained by removing all annotations from the r.h.s. of
all of its pairs.


We are now ready to present the two additional rules, having the same
higher priority of $\textsf{Asmp}$:
\begin{displaymath}
  \begin{array}{lll}
    \infer[\textsf{Asmp}2]
    {
    \begin{array}{c}
    \Sigma, (T,    \Tbranchsingle {l_1}
    {
    \ldots\Tbranchsingledec {l_r}
    {
    \ldots\Tbranchsingle {l_n}{S}\ldots
    }{\alpha}
    \ldots
    }) \vdash
    \\
    T\subtypet 
    \Tbranchsingledec {l_1}
    {
    \ldots\Tbranchsingle {l_m}
    {
    S'
    }
    \ldots
    }{\alpha}
    \end{array}
    }
    {
\begin{array}{c}    
j > i
\quad 
s < |\gamma|
\quad 
l_1\cdots l_n = \gamma^i\cdot(l_1\cdots l_{s})
\\
l_1\cdots l_m = \gamma^j\cdot(l_1\cdots l_{s})
    \quad
    S\in\{\oplus, \mu\}
    \quad
                     \& \in T 
\quad
  \undecore{S}=\undecore{S'}
  \end{array}
  }    
  \end{array}
\end{displaymath}

\begin{displaymath}
  \begin{array}{lll}
    \infer[\textsf{Asmp}3]
    {
    \Sigma, (T,    \Tbranchsingle {l_1}
    {
    \ldots\Tbranchsingle {l_n}
    {
    S
    }
    \ldots
    }) \vdash T\subtypet 
    \Tbranchsingle {l_1}
    {
    \ldots\Tbranchsingle {l_m}
    {
    S'
    }
    \ldots
    }
    }
    {
    S\in\{\oplus, \mu\}
    \qquad
    \& \not\in T
    \qquad
    n < m
    \qquad
    \undecore{S}=\undecore{S'}
    }
  \end{array}
\end{displaymath}

\medskip


Above $S\in\{\oplus, \mu\}$ means that $S$ starts
with either an output selection or a recursive definition,
while $\& \in T$ requires the occurrence of at least
one input branching anywhere in the term $T$.

Intuitively, $\textsf{Asmp}2$ allows the algorithm to terminate when,
after the judgement $\Sigma \vdash T \subtypet R$, another
judgement $\Sigma' \vdash T \subtypet R'$ is reached such that:
\begin{itemize}
\item
both $R$ and $R'$ start with a sequence of input branchings
(respectively labeled with $l_1 \cdots l_n$ and $l_1 \cdots l_m$);
\item
the label sequences are repetitions of the same pattern, with the 
second one strictly longer than the first one (namely,
there exist $\gamma$, a proper prefix $l_1 \cdots l_s$ 
of $\gamma$ and $j > i$ such that $l_1 \cdots l_n = \gamma^i \cdot
(l_1 \cdots l_s)$ and $l_1 \cdots l_m = \gamma^j \cdot
(l_1 \cdots l_s)$);

\item there exists $l_r$ in the initial input sequence of $R$ having an annotation $\alpha$ that
coincides with the annotation of $l_1$  in the initial input sequence of $R'$ (hence
$l_1=l_r$, $n \geq 1$ and $m \geq 1$)

\item
after such initial input branchings, both $R$ and $R'$ 
continue with the same term up to annotations ($S$ and $S'$ such that  $\undecore{S}=\undecore{S'}$).
\end{itemize}
The algorithm can terminate in this case because otherwise
it would continue indefinitely by repeating the same
steps performed between the judgements \mbox{$\Sigma \vdash T \subtypet R$} and 
$\Sigma' \vdash T \subtypet R'$. During these steps
the l.h.s.\ term $T$ performs a cycle including: 
the non-empty sequence of inputs $l_1 \cdots l_{r-1}$ (due to the requirement above about the $\alpha$ annotation and the constraint $\& \in T$ in the rule premise) that is a repetition of $\gamma$,
which is matched by the initial part of the input sequence in the r.h.s.;
and a sequence of outputs, which is matched by the final term $S$
of the r.h.s.\ that, itself, performs a cycle.
While performing this cycle, $S$ generates new input branchings that strictly extend the initial input sequence.

The rule $\textsf{Asmp}3$ allows the algorithm to terminate in the
case the potential infinite repetition includes output selections only. In this case it is sufficient to check that 
the l.h.s.\ cycles without consuming any inputs and that,
after a number of such cycles: the amount of  
initial accumulated inputs in the r.h.s.\ strictly increases
and the term after such accumulation performs, itself, a cycle.

Also for the new algorithm we use 
$\Sigma \vdash T \subtypet S \rderiv^* \Sigma' \vdash T' \subtypet S'$
to denote the application of one rule on the former
judgement that generates the latter,
and $\Sigma \vdash T \subtypet S \notrderiv$ to denote
that no rule can be applied to the judgement $\Sigma \vdash T \subtypet S$.

We now prove that the new algorithm terminates.

\begin{lemma}\label{lemma:termination}
Given two types 
$T \in T^{\textsf{out}} \, \cap \, T^{\textsf{in}}$ (resp. $T \in T^{\textsf{out}}$)
and 
\mbox{$S \in T^{\textsf{in}}$} (resp. $S \in T^{\textsf{in}} \, \cap \, T^{\textsf{out}}$), 
the algorithm applied to initial judgement $\emptyset \vdash T \subtypet \decore{S}$
terminates.
\end{lemma}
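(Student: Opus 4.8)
The plan is to argue by contradiction. Suppose the algorithm does not terminate on $\emptyset \vdash T \subtypet \decore{S}$. Since $T \in T^{\textsf{out}}$ and $S \in T^{\textsf{in}}$ in both cases of the statement, every output selection on the left and every input branching on the right is a singleton, so each of the rules $\textsf{Out}$ and $\textsf{In}$ has exactly one premise; together with the single-premise rules $\textsf{RecL}, \textsf{RecR}_1, \textsf{RecR}_2$ and the determinism of the procedure, the derivation is then a single infinite path $\emptyset \vdash T \subtypet \decore{S} \rderiv \Sigma_1 \vdash T_1 \subtypet S_1 \rderiv \cdots$ on which none of the halting rules $\textsf{End}, \textsf{Asmp}, \textsf{Asmp}2, \textsf{Asmp}3$ ever fires. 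I would then set up two finiteness facts: since $T$ is a finite (regular) term, the left-hand sides $T_i$ obtained by unfolding and by taking continuations range over a finite set $D_T$; and since $S \in T^{\textsf{in}}$, every right-hand side $S_i$ decomposes as a word $p_i$ of single-input branchings -- the accumulated input prefix -- followed by a core $W_i$ that starts with an output selection or a recursion (or is $\Tend$), where $W_i$ ranges over a finite set $D_S$. Thus the only unbounded quantity along the path is $|p_i|$.

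The proof then hinges on a dichotomy on $\{|p_i|\}_i$, using that $\textsf{RecL}/\textsf{RecR}_1/\textsf{RecR}_2$ fire infinitely often (an infinite run cannot consist of $\textsf{In}/\textsf{Out}$ steps alone, since these strictly consume prefix or core structure that only an unfolding can replenish) and that each such step records the current pair in $\Sigma$, membership for $\textsf{Asmp}$ being tested up to annotations, i.e.\ against $\undecore{\Sigma}$. If $\{|p_i|\}_i$ is bounded, then over the finite label alphabet the unannotated configurations $(T_i, \undecore{S_i})$ form a finite set, so some recorded configuration recurs; at its second occurrence $\textsf{Asmp}$ applies, contradicting the assumption. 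Hence $|p_i|$ is unbounded, and I would split on whether the left-hand side ever inputs.

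When $\& \not\in T_i$ eventually, no $\textsf{In}$ step occurs, the prefix grows monotonically by appending to its back, and the pair $(T_i, W_i)$ cycles over the finite set $D_T \times D_S$; two occurrences of the same such pair with strictly longer prefix and equal core up to annotations instantiate $\textsf{Asmp}3$, whose premise only requires $n < m$, $S \in \{\oplus, \mu\}$, $\& \not\in T$ and $\undecore{S} = \undecore{S'}$. When $\& \in T_i$, the left-hand side periodically consumes labels from the front of the prefix while the cyclic core periodically appends labels to its back; since both the consumed and the appended streams are eventually periodic (the states cycle through $D_T$ and $D_S$), the standing prefix becomes eventually periodic, of the form $\gamma^{i}(l_1\cdots l_s)$ and, one or more cycles later, $\gamma^{j}(l_1\cdots l_s)$ with $j > i$. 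Crucially, once the prefix grows by at least one unit per cycle, an input produced in an earlier cycle survives consumption and is carried to the front: this is exactly the situation in which the front input of the later configuration and the input at some position $r$ of the recorded configuration are the same instance and hence share the annotation $\alpha$ that rule $\textsf{Asmp}2$ tests. Verifying that all of its premises ($\gamma$-periodicity, $j > i$, $s < |\gamma|$, $\undecore{S} = \undecore{S'}$, $S \in \{\oplus, \mu\}$, $\& \in T$, and the annotation match) then hold yields the required contradiction.

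I expect the $\textsf{Asmp}2$ case to be the main obstacle. The delicate points are, first, to extract from the interaction of a FIFO prefix with periodic production and periodic consumption the precise decomposition $l_1\cdots l_n = \gamma^i(l_1\cdots l_s)$ and $l_1\cdots l_m = \gamma^j(l_1\cdots l_s)$; and second, to certify the annotation side-condition -- that the input now at the front of the longer prefix carries the annotation of the input at position $r$ of the recorded prefix -- which amounts to showing that between the two configurations the left-hand side completed a whole number of cycles, each consuming one copy of $\gamma$, so that the queue positions align and the surviving annotation witnesses the cycle. Both cases $\selsubtype_{\mathsf{sin}}$ and $\selsubtype_{\mathsf{sout}}$ are covered uniformly by this argument, since they differ only in which side may carry non-singleton branchings, a feature that affects neither the finiteness of $D_T, D_S$ nor the eventual periodicity of the prefix.
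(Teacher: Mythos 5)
Your proposal follows essentially the same route as the paper's proof: contraposition on an infinite run, the decomposition of every right-hand side into an unbounded single-input prefix over a core drawn from a finite set (with left-hand sides also ranging over a finite set), rule $\textsf{Asmp}$ disposing of the bounded-prefix case, $\textsf{Asmp}3$ of the case $\& \not\in T_f$, and $\textsf{Asmp}2$ applied after extracting a $\gamma$-periodic prefix decomposition with the annotation side-condition certified by the left-hand side completing whole cycles. The only differences are presentational: the paper derives the periodicity from whichever side is fully single-choice in the two variants (the left-hand input cycle $\gamma$ in one, a rotation of the right-hand core's input cycle $\eta$ in the other, rather than a uniform ``states cycle'' argument), justifies the finiteness of the cores via the $\depth$-bounded unfolding of $\textsf{RecR}_2$, and makes your ``far enough apart'' selection precise via the $maxIn$ bound and a replay argument --- exactly the delicate points you flag, resolved as you anticipate.
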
 
\proof In this proof, we abstract away from the annotations of input
actions, i.e., we denote two types that differ only in the annotations
with the same term.

We proceed by contraposition.  Assume that there exist $T$ and $S$
such that the algorithm starting from the initial judgement
$\emptyset \vdash T \subtypet S$ does not terminate.  Hence, there
must exist an infinite sequence of rule applications
$\Sigma \vdash T \subtypet S \rderiv \Sigma_1 \vdash T_1 \subtypet S_1
\rderiv \ldots \rderiv \Sigma_n \vdash T_n \subtypet S_n \rderiv
\ldots$.
In this sequence, infinitely many unfoldings of recursive definitions
will be performed on infinitely many traversed judgements
$\Sigma' \vdash T' \subtypet S'$.  All these pairs $(T',S')$ must be
pairwise distinct otherwise rule $\textsf{Asmp}$ will be applied to
terminate the sequence.  This implies that the environment $\Sigma$
grows unboundedly in order to contain all these infinitely many
distinct pairs $(T',S')$.

We now prove that all such pairs $(T',S')$ are of the following form:
$(T_f,$ \linebreak \mbox{$\Tbranchsingle {l_1}{\ldots\Tbranchsingle {l_n}{S_f}\ldots})$}
where $T_f$ and $S_f$ belong to a finite set of terms.  The first term
$T'$ is obtained from the initial term $T$ by means of consumptions of
initial inputs or outputs (rules $\textsf{In}$ and $\textsf{Out}$), or
single unfoldings of top level recursive definitions (rules
$\textsf{RecL}$).  The set of terms that can be obtained in this way
from the finite initial term $T$ is clearly finite.  On the contrary,
the second group of terms $S'$ could be obtained by application of
different transformations: in particular $\textsf{Out}$ that allows
for the anticipation of outputs prefixed by an arbitrary sequence of
inputs, and $\textsf{RecR}_2$ that can unfold more than one recursive
definition at a time.  Concerning the multiple unfoldings of
$\textsf{RecR}_2$, we have that the $\depth(\_)$ function guarantees
that recursive definitions guarded by an output operation are never
unfolded. In this way a subterm of $S'$ prefixed by an output is taken
from a finite set of terms.  Obviously, also the set of possible
subterms starting with a recursive definition is finite as well.  As
$S'$ has only input single-choices, we can conclude that it is of the
form $\Tbranchsingle {l_1}{\ldots\Tbranchsingle {l_n}{S_f}\ldots}$
because it starts with a (possibly empty) sequence of inputs followed
by a term $S_f$, guarded by an output or a recursive definition, taken
from a finite set.

As infinitely many distinct pairs $(T',S')$ are introduced in
$\Sigma$, there are infinitely many distinct pairs having the same
$T_f$ and $S_f$.  This guarantees the existence of an infinite
sequence
$$
\begin{array}{l}
  (T_f,\Tbranchsingle {l^1_1}{\ldots\Tbranchsingle {l^1_{n_1}}{S_f}\ldots})\\
  (T_f,\Tbranchsingle {l^2_1}{\ldots\Tbranchsingle {l^2_{n_2}}{S_f}\ldots})\\ \cdots\\
  (T_f,\Tbranchsingle {l^v_1}{\ldots\Tbranchsingle {l^v_{n_v}}{S_f}\ldots})\\ \cdots
\end{array}
$$
of pairs that are introduced in $\Sigma$ in this order, and for which
$n_v$ is strictly growing.

If $T_f$ does not contain input actions (i.e. $\& \not\in T_f$) we
have that the rule $\textsf{Asmp}3$ could be applied when the second
pair in the above sequence 
$(T_f,$
\linebreak
\mbox{$\Tbranchsingle {l^2_1}{\ldots\Tbranchsingle
  {l^2_{n_2}}{S_f}\ldots})$}
was introduced in $\Sigma$. As $\textsf{Asmp}3$ has priority, it is
necessary to apply this rule, thus terminating successfully the
sequence of rule applications. This contradicts the initial assumption
about the infinite sequence of rule applications.
 
Now, consider $T_f$ that contains at least one input action.  We
separate this case in two subcases:
$T_f \in T^{\textsf{out}} \cap T^{\textsf{in}}$ and
$T_f \in T^{\textsf{out}}$.

In the first subcase, both inputs and outputs in $T_f$ are
single-choices.  This means that, by cycling, $T_f$ performs
indefinitely always the same sequence of input branchings: let $\gamma$ be the 
sequence of the corresponding labels. The inputs accumulated
on the r.h.s. term should be consistent, i.e. must have
labels of the form $l^v_1\cdots l^v_{n_v} = \gamma^i \cdot \iota$, with $\iota$
prefix of $\gamma$.
Consider now the final part~$\iota$ of the sequences
$l^v_1\cdots l^v_{n_v}$ that, as discussed above, are all prefixes of~$\gamma$; as the different prefixes of $\gamma$ are finite, we have
that there exists an infinite subsequence of pairs
$(T_f,\Tbranchsingle {l^v_1}{\ldots\Tbranchsingle
  {l^v_{n_v}}{S_f}\ldots})$
all having the same final $\iota$.  Consider now $maxIn$ as the
maximal number of inputs between two subsequent outputs in $S_f$ (or
its unfolding).  We select from this infinite subsequence a pair
$(T_f,\Tbranchsingle {l^k_1}{\ldots\Tbranchsingle
  {l^k_{n_k}}{S_f}\ldots})$
with $n_k > maxIn$.  We now consider the subsequent pair in the
subsequence
$(T_f,\Tbranchsingle {l^w_1}{\ldots\Tbranchsingle
  {l^w_{n_w}}{S_f}\ldots})$
and the corresponding sequence of rule applications:\\
$\Sigma_k \vdash T_f \subtypet \Tbranchsingle
{l^k_1}{\ldots\Tbranchsingle {l^k_{n_k}}{S_f}\ldots} \rderiv^*$ \\
$\Sigma_w \vdash T_f \subtypet \Tbranchsingle
{l^w_1}{\ldots\Tbranchsingle {l^w_{n_w}}{S_f}\ldots} $\\
Let ${l'_1}\ldots{l'_h}$ be the labels of the outputs involved in
applications of the rule $\textsf{Out}$.  Such outputs are present in
$S_f$ (and its unfoldings). Let
$\Tbranchsingle {l''_1}{\ldots\Tbranchsingle {l''_g}{S'}\ldots}$
(assuming $S' \in\{\oplus, \mu\}$) be the term obtained from $S_f$ by
(unfolding the term and) consuming the outputs labeled with
${l'_1}\ldots{l'_h}$. We have that $S' = S_f$.  We conclude
considering two possible cases: $g \leq |\iota|$ and $g > |\iota|$.
\begin{itemize}
\item
If $g \leq |\iota|$,
we have that all the applications of the rule $\textsf{In}$
involve inputs that are already present in the initial sequence
of inputs of the r.h.s. in
$(T_f,\Tbranchsingle {l^k_1}{\ldots\Tbranchsingle {l^k_{n_k}}{S_f}\ldots})$.
This guarantees that it is possible to apply on the judgement
$\Sigma_f \vdash T_f\subtypet \Tbranchsingle {l^w_1}{\ldots\Tbranchsingle {l^w_{n_w}}{S_f}\ldots}$
the rule $\textsf{Asmp}2$.
As $\textsf{Asmp}2$ has priority it is
necessary to apply this rule thus terminating successfully the
sequence of rule applications. This contradicts the initial assumption
about the infinite sequence of rule applications.
\item
If $g > |\iota|$,
we have that ${l''_1}\ldots {l''_g} = \iota' \cdot \gamma^y \cdot \iota$
with $\iota \cdot \iota' = \gamma$.
From the infinite subsequence we select a pair 
$(T_f,\Tbranchsingle {l^r_1}{\ldots\Tbranchsingle {l^r_{n_r}}{S_f}\ldots})$
such that $n_r-|\iota|$ is greater than the number of applications of the rule 
$\textsf{In}$ in the sequence of rule applications: \\
$\Sigma_k \vdash T_f \subtypet \Tbranchsingle {l^k_1}{\ldots\Tbranchsingle {l^k_{n_k}}{S_f}\ldots} \rderiv^*$ \\
$\Sigma_w \vdash T_f \subtypet \Tbranchsingle {l^w_1}{\ldots\Tbranchsingle {l^w_{n_w}}{S_f}\ldots}$ \\
Consider now the same sequence of rule applications starting from the judgement 
$\Sigma_r \vdash T_f \subtypet \Tbranchsingle {l^r_1}{\ldots\Tbranchsingle {l^r_{n_r}}{S_f}\ldots}$
that introduced 
$(T_f,\linebreak $\mbox{$\Tbranchsingle {l^r_1}{\ldots\Tbranchsingle {l^r_{n_r}}{S_f}\ldots})$}
in the environment. Let \\
$\Sigma_q \vdash T_f \subtypet \Tbranchsingle {l^q_1}{\ldots\Tbranchsingle {l^q_{n_q}}{S''}\ldots}$\\
be the reached judgement. We have that on this judgement it is possible to 
apply $\textsf{Asmp}2$ because $S'' = S_f$ and 
${l^q_1}\ldots{l^q_{n_q}}=\gamma^u\cdot\iota\cdot\iota'\cdot\gamma^y \cdot\iota$.
As discussed in the previous case this contradicts the initial 
assumption on the infinite sequence of rule applications.
\end{itemize}

It remains the final subcase $T_f \in T^{\textsf{out}}$.  In this case
we have $S_f \in T^{\textsf{in}} \cap T^{\textsf{out}}$.  As $T_f$ has
outputs with single-choice, the rule $\textsf{In}$ will be applied at
least once in the sequence of rule applications between every pair and
the subsequent one.  For this reason, it is not restrictive to assume
that in all the pairs
$(T_f,$\linebreak \mbox{$\Tbranchsingle {l^v_1}{\ldots\Tbranchsingle
  {l^v_{n_v}}{S_f}\ldots})$}
the inputs labeled with $l^v_1\cdots l^v_{n_v}$ are produced by
previous unfoldings of the same term $S_f$.  As in $S_f$ all the
inputs and outputs are single-choice, there is a predefined sequence
of inputs within a cycle from $S_f$ to $S_f$ again. Let $\eta$ be
the sequence of the labels corresponding to such inputs. 
Then we have that the inputs accumulated on the l.h.s. term, 
labeled with the sequence of labels $l^v_1\cdots l^v_{n_v}$, will be
such that $l^v_1\cdots l^v_{n_v} = \rho \eta^i$, for some $\rho$ suffix of
$\eta$. We consider a suffix $\rho$ because during the execution of the 
algorithm some previously accumulated input will be consumed to mimick
inputs of the l.h.s. term, and $\rho$ represents the part of inputs
that have been left from a sequence~$\rho$ that has been only partially consumed.
As the suffixes of $\eta$ are finite, we can extract an
infinite subsequence of pairs that always consider the same suffix
$\rho$. Namely, for all the pairs, the second term has an initial
sequence of inputs with a sequence of labels belonging to
$\rho \eta^*$, for the same $\rho$.  But as $\rho$ is a suffix of
$\eta$ there exists a rotation $\gamma'$ of $\eta$ such that all the
sequences belong to $\gamma'^*\rho'$ with $\rho\cdot \rho'=\eta$.  
Hence all the sequences $l^v_1\cdots l^v_{n_v}$ are of the form $\gamma'^i \cdot \rho'$.
By considering $\gamma=\gamma'$ and $\iota=\rho'$ we
can now conclude with the same
arguments used in the previous subcase.  
\qed

We now move to the proof of soundness of the algorithm.

\begin{lemma}\label{lem:soundness}
Given two types 
$T \in T^{\textsf{out}} \cap T^{\textsf{in}}$ (resp. $T \in T^{\textsf{out}}$)
and 
$S \in T^{\textsf{in}}$ (resp. $S \in T^{\textsf{in}} \cap T^{\textsf{out}}$), 
 we have that
  $\emptyset \vdash T \subtypea S \rderiv^* \Sigma' \vdash T'
  \subtypea S' \notrderiv$
  if and only if
  $\emptyset \vdash T \subtypet \decore{S} \rderiv^* \Sigma'' \vdash
  T'' \subtypet S'' \notrderiv$.
\end{lemma}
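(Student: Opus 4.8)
The plan is to route the equivalence through the characterisation already supplied by Lemma~\ref{lem:semidecidable}, which states that the original procedure reaches a stuck judgement exactly when $T \not\subtype S$. Since the annotated algorithm shares the rules \textsf{End}, \textsf{In}, \textsf{Out}, \textsf{RecL}, $\textsf{RecR}_1$, $\textsf{RecR}_2$, \textsf{Asmp} with the original one and only adds the two terminating axioms \textsf{Asmp2} and \textsf{Asmp3} (together with the right-hand side annotations), it suffices to prove that the annotated algorithm reaches a stuck judgement if and only if $T \not\subtype S$; combining this with Lemma~\ref{lem:semidecidable} then yields the stated iff. I would accordingly split the work into \emph{completeness} (if $T \subtype S$ then no branch of the annotated algorithm is stuck) and \emph{soundness} (if no branch is stuck then $T \subtype S$). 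Note that by Lemma~\ref{lemma:termination} the annotated algorithm always terminates, so ``no stuck branch'' coincides with ``every leaf is closed by one of the axioms''.

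For completeness I would fix a witnessing subtyping relation $\mathcal R$ with $(T,S)\in\mathcal R$ and reuse the invariant already exploited in the only-if part of Lemma~\ref{lem:semidecidable}: every reachable judgement $\Sigma \vdash T' \subtypet S'$ satisfies $(T',\unfold{n}{\undecore{S'}})\in\mathcal R$ for some $n$. The point is that annotations do not affect the applicability of the structural rules, which inspect only $T'$ and $\undecore{S'}$, and that delaying applications of \textsf{Asmp} because of fresh annotations can only make branches longer without disturbing the invariant, since the invariant is preserved by each producing rule exactly as dictated by items $2$, $3$ and $4$ of Definition~\ref{def:subtyping}. Consequently every reachable pair matches one of \textsf{End}, \textsf{In}, \textsf{Out}, \textsf{RecL}, $\textsf{RecR}_1$, $\textsf{RecR}_2$ (a stuck pair would be a genuine structural mismatch, impossible for a pair lying in $\mathcal R$), while \textsf{Asmp2} and \textsf{Asmp3} can only close branches and never create a stuck state; hence no branch is stuck.

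For soundness I would turn a successful (stuck-free) finite derivation into an explicit subtyping relation $\mathcal R$ built from the unannotated pairs labelling the derivation. Leaves closed by \textsf{End} and by \textsf{Asmp} are immediate: the former gives item $1$ of Definition~\ref{def:subtyping}, while the latter is a genuine coinductive back-edge to an ancestor carrying the very same unannotated pair. The hard part, and the crux of the whole lemma, is to justify the leaves closed by \textsf{Asmp2} and \textsf{Asmp3}, which do not point to an identical ancestor but only detect a growing pattern. Here I would argue that the derivation segment between the two matched judgements $\Sigma \vdash T \subtypet R$ and $\Sigma' \vdash T \subtypet R'$ is a \emph{pumpable loop}: the premises of the rule force the left-hand side to return to the same $T$ after consuming exactly one further copy of the period $\gamma$ (this is what the annotation $\alpha$ linking $l_r$ in $R$ to $l_1$ in $R'$, together with $\&\in T$, certifies), and force the right-hand side tail to return to a term equal up to annotations, $\undecore{S}=\undecore{S'}$, while its accumulated input prefix has grown from $\gamma^i\cdot(l_1\cdots l_s)$ to $\gamma^j\cdot(l_1\cdots l_s)$.

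Iterating this loop yields, for every $k$, a pair $(T,R^{(k)})$ whose right-hand side is a prefix of the form $\gamma^{\,i+k(j-i)}\cdot(l_1\cdots l_s)$ followed by $\undecore{S}$, and replaying the same fixed sequence of \textsf{In}, \textsf{Out}, \textsf{RecL} and $\textsf{RecR}$ steps carries $(T,R^{(k)})$ to $(T,R^{(k+1)})$. I would verify that each step of this replay still matches the corresponding item of Definition~\ref{def:subtyping}, in the spirit of the computation in Lemma~\ref{lemmaOnlyIf}: the outputs emitted by $T$ in one turn are anticipated against a single unfolding of the recursive tail of $R^{(k)}$ (item $2$), and each input performed by $T$ is matched by the leading label of the ever-present accumulated prefix (item $3$), exactly as in the unbounded-accumulation examples of Section~\ref{sec:examples}. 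Hence the infinite family $\{(T,R^{(k)})\mid k\ge 0\}$, together with the finitely many pairs on the non-looping part of the tree, forms a subtyping relation containing the root pair, so $T\subtype S$. The rule \textsf{Asmp3} is the degenerate instance in which the loop performs no input on the left ($\&\notin T$) and merely accumulates outputs, so the same pumping argument applies with an empty consumed period; this case is precisely the one arising when $T\in T^{\textsf{out}}$. The main obstacle throughout is this verification that the syntactic side-conditions of \textsf{Asmp2} and \textsf{Asmp3} are strong enough to guarantee that the detected loop really pumps into a valid infinite subtyping relation rather than a spurious match.
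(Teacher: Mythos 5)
Your proposal is correct in substance, but it reorganizes the paper's pipeline rather than reproducing it. The paper proves this lemma as a purely syntactic, run-level equivalence between the two procedures, never invoking $\subtype$ here: the \emph{if} direction is immediate (since $\textsf{Asmp}2$/$\textsf{Asmp}3$ close a branch successfully, a stuck $\subtypet$-run uses only the old rules and is verbatim a stuck $\subtypea$-run), and the \emph{only if} direction argues by contradiction that neither new rule can fire along the $\subtypet$-mirror of a stuck $\subtypea$-run: if $\textsf{Asmp}2$ applied, then the segment between the matched judgements $\Sigma_s \vdash T_s \subtypea S_s'$ and $\Sigma_e \vdash T_e \subtypea S_e'$ would replay indefinitely --- because in the single-choice fragments the applicable rule sequence is \emph{unique}, $T_s = T_e$, the consumed prefix $l_1\cdots l_{r-1}$ is still available in the grown prefix, and both tails end in the same $R$ --- with the accumulated prefix growing from $\gamma^{j}\cdot(l_1\cdots l_s)$ to $\gamma^{j+(j-i)}\cdot(l_1\cdots l_s)$, so that run could never reach a stuck judgement. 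You instead route through Lemma~\ref{lem:semidecidable} and prove semantic correctness of the annotated algorithm directly, justifying $\textsf{Asmp}2$/$\textsf{Asmp}3$ leaves by pumping an explicit infinite subtyping relation $\{(T,R^{(k)})\mid k\geq 0\}$; the two arguments are interderivable precisely because Lemma~\ref{lem:semidecidable} identifies ``the $\subtypea$-run never gets stuck'' with $T \subtype S$. Your iteration arithmetic ($\gamma^{i+k(j-i)}\cdot(l_1\cdots l_s)$) matches the paper's, and your replay verification is the same work in existential form --- in fact slightly lighter, since Definition~\ref{def:subtyping} existentially quantifies the unfolding depth and input context, so you need only validity of each replayed step, not the paper's uniqueness/determinism analysis of the procedure. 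What the paper's route buys is locality (the semantic content is confined to Lemma~\ref{lem:semidecidable} and only composed at Theorem~\ref{thm:algcor}); what yours buys is that you effectively prove Theorem~\ref{thm:algcor} en route and obtain this lemma as a corollary. If you execute the sketch, make explicit what you currently leave tacit in the completeness half: that the invariant and rule applicability are insensitive to annotations, the identification the paper records as $\undecore{\Sigma''}=\Sigma'$ and $\undecore{S''}=S'$.
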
 
\proof We consider the two implications separately starting from the
\emph{if} part.  Assume that
$\emptyset \vdash T \subtypet \decore{S} \rderiv^* \Sigma'' \vdash T''
\subtypet S'' \notrderiv$.
In this sequence of rule applications, the new rules $\textsf{Asmp}2$
and $\textsf{Asmp}3$ are never used otherwise the sequence terminates
successfully by applying such rules. Hence, by applying the same
sequence of rules, we have
$\emptyset \vdash T \subtypea S \rderiv^* \Sigma' \vdash T' \subtypea
S'$
with $T''=T'$, $\undecore{S''}=S'$ and $\undecore{\Sigma''}=\Sigma'$.
We have that $\Sigma' \vdash T' \subtypea S' \notrderiv$, otherwise if
a rule could be applied to this judgement, the same rule could be
applied also to $\Sigma'' \vdash T'' \subtypet S''$ thus contradicting
the assumption \linebreak $\Sigma'' \vdash T'' \subtypet S'' \notrderiv$.

We now move to the \emph{only if} part. Assume the existence of the
sequence of rule applications $\rho_a= 
\emptyset \vdash T \subtypea S \rderiv^* \Sigma' \vdash T' \subtypea S' \notrderiv$.
As the algorithm for $\emptyset \vdash T \subtypet \decore{S}$ 
considers a superset of rules, we have two possible cases: either 
$\emptyset \vdash T \subtypet \decore{S} \rderiv^* \Sigma'' \vdash T'' \subtypet S'' \notrderiv$
by applying the same sequence of rules, or during the 
application of this sequence of rules, starting from $\emptyset \vdash T \subtypet \decore{S}$,
a judgement is reached on which one of the additional rules $\textsf{Asmp}2$
or $\textsf{Asmp}3$ can be applied. 
Namely, there exists a sequence
$\rho_t=\emptyset \vdash T \subtypet \decore{S} \rderiv^* \Sigma_e \vdash T_e \subtypet S_e$,
corresponding to a prefix of $\rho_a$,
such that either $\textsf{Asmp}2$ or $\textsf{Asmp}3$ can be applied
on the judgement $\Sigma_e \vdash T_e \subtypet S_e$.
We conclude the proof by showing that this second case never occurs.
We discuss only $\textsf{Asmp}2$, as the case for
$\textsf{Asmp}3$ is treated similarly.

If $\emptyset \vdash T \subtypet S \rderiv^* \Sigma_e \vdash T_e \subtypet S_e$
and $\textsf{Asmp}2$ can be applied to \mbox{$\Sigma_e \vdash T_e \subtypet S_e$},
there exists an intermediary judgement
$\Sigma_s \vdash T_s \subtypet S_s$, traversed during such sequence
of rule applications, that introduces in the environment the
pair $(T_s,S_s)$ used in the above application of the rule $\textsf{Asmp}2$.
Hence $\rho_t$ has a suffix 
\linebreak $\Sigma_s \vdash T_s \subtypet S_s \rderiv^* \Sigma_e \vdash T_e \subtypet S_e$
with:
\begin{itemize} 
\item $T_s=T_e$,
\item $S_s=\Tbranchsingle {l_1}{\ldots\Tbranchsingle {l_n}{R}\ldots}$ and
$S_e=\Tbranchsingle {l_1}{\ldots\Tbranchsingle {l_m}{R'}\ldots}$
with $\undecore{R}=\undecore{R'}$,
$l_1\cdots l_n = \gamma^i\cdot(l_1\cdots l_{s})$,
$l_1\cdots l_m = \gamma^j\cdot(l_1\cdots l_{s})$
for $i < j$ and $s < |\gamma|$;
\item
during the entire sequence 
$\Sigma_s \vdash T_s \subtypea S_s \rderiv^* 
\Sigma_e \vdash T_e \subtypea S_e$
only a prefix $l_1\cdots l_{r-1}$ of the input actions $l_1\cdots l_n$
is consumed from $S_s$.
\end{itemize} 

Let $\Sigma_s'=\undecore{\Sigma_s}$,
$S_s'=\undecore{S_s}$,
$\Sigma_e'=\undecore{\Sigma_e}$,
and
$S_e'=\undecore{S_e}$.
As $\rho_t$ corresponds to a prefix of $\rho_a$
we have that 
$\rho_a=$\\$\emptyset \vdash T \subtypea S \rderiv^*
\Sigma_s' \vdash T_s \subtypea S_s' \rderiv^* 
\Sigma_e' \vdash T_e \subtypea S_e' \rderiv^*
\Sigma' \vdash T' \subtypea S' \notrderiv$ \\
This is not possible because we now show that after
the sequence $\Sigma_s' \vdash T_s \subtypea S_s'$ \mbox{$ \rderiv^* 
\Sigma_e' \vdash T_e \subtypea S_e'$},
it is not possible to reach any judgement 
$\Sigma' \vdash T' \subtypea S'$ such that $\Sigma' \vdash T' \subtypea S' \notrderiv$.
On the basis of the observations listed above we have:
\begin{itemize} 
\item $T_s=T_e$,
\item $S_s'=\Tbranchsingle {l_1}{\ldots\Tbranchsingle {l_n}{R}\ldots}$,
$S_e'=\Tbranchsingle {l_1}{\ldots\Tbranchsingle {l_m}{R}\ldots}$
with
$l_1\cdots l_n = \gamma^i\cdot(l_1\cdots l_{s})$,
$l_1\cdots l_m = \gamma^j\cdot(l_1\cdots l_{s})$
for $i < j$ and $s < |\gamma|$;
\item
during the entire sequence 
$\Sigma_s' \vdash T_s \subtypea S_s' \rderiv^* 
\Sigma_e' \vdash T_e \subtypea S_e'$
only a prefix $l_1\cdots l_{r-1}$ of the input actions $l_1\cdots l_n$
is consumed from $S_s'$.
\end{itemize}
We have that the sequence of
rules applied in 
$\Sigma_s' \vdash T_s \subtypea S_s' \rderiv^* 
\Sigma_e' \vdash T_e \subtypea S_e'$ 
is the unique one that can be applied also to the ending judgement 
$\Sigma_e' \vdash T_e \subtypea S_e'$.
This is guaranteed by the fact that we are considering
single-choice subtyping, the correspondence of the terms
$T_s=T_e$, the availability of the input actions
labeled with $l_1\cdots l_{r-1}$ in $S_e'$,
and the fact that both $S_s'$ and $S_e'$
have the same term $R$ at the
end of their initial input actions.
Consider now the judgement $\Sigma_e'' \vdash T_e' \subtypea S_e''$
reached at the end of such sequence, i.e.
$\Sigma_e' \vdash T_e \subtypea S_e' \rderiv^* 
\Sigma_e'' \vdash T_e' \subtypea S_e''$. 
We have that the same properties listed above hold also for the 
new sequence $\Sigma_e' \vdash T_e \subtypea S_e' \rderiv^* 
\Sigma_e'' \vdash T_e' \subtypea S_e''$:
\begin{itemize} 
\item $T_e=T_e'$, because the same transformations are
applied to the l.h.s. terms by the rules that are applied. 
\item $S_e'=\Tbranchsingle {l_1}{\ldots\Tbranchsingle {l_m}{R}\ldots}$ and
$S_e''=\Tbranchsingle {l_v}{\ldots\Tbranchsingle {l_m}{R}\ldots}$
with
$l_1\cdots l_m = \gamma^j\cdot(l_1\cdots l_{s})$ and
$l_1\cdots l_v = \gamma^{(j+(j-i))}\cdot(l_1\cdots l_{s})$,
where $i$ is the number of the repetitions of $\gamma$ 
in the initial r.h.s. term $S_s'$, hence $j-i$ is the number
of new repetitions of $\gamma$ added during the sequence of
rule applications.
From the previous properties 
we have $j > i$, hence $j < j+(j-i)$,
and $s < |\gamma|$.
\item
During the entire sequence 
$\Sigma_e' \vdash T_e \subtypea S_e' \rderiv^* 
\Sigma_e'' \vdash T_e' \subtypea S_e''$
only a prefix $l_1\cdots l_{r-1}$ of the input actions $l_1\cdots l_m$
is consumed from $S_e'$.
\end{itemize}
As these properties continue to hold, we have that the 
sequence of rules applied in
$\Sigma_e' \vdash T_e \subtypea S_e' \rderiv^* 
\Sigma_e'' \vdash T_e' \subtypea S_e''$
can be continued to be applied indefinitely,
hence it is not possible to reach any judgement $\Sigma' \vdash T' \subtypea S'$
such that
\mbox{$\Sigma' \vdash T' \subtypea S' \notrderiv$}.
\qed

We can finally conclude with the following theorem that states decidability 
for more general versions of $\selsubtype_\mathsf{sin}$ and $\selsubtype_\mathsf{sout}$, where 
we do not impose related types to belong to $\noinf$.
  
\begin{theorem}[Algorithm Correctness]\label{thm:algcor}
  Given two types 
$T \in T^{\textsf{out}} \cap T^{\textsf{in}}$ (resp.\ $T \in T^{\textsf{out}}$)
and 
$S \in T^{\textsf{in}}$ (resp.\ $S \in T^{\textsf{in}} \, \cap\,  T^{\textsf{out}}$), 
 we have that \mbox{$\emptyset \vdash T \subtypet \decore{S}$}
  if and only if
  $T \leq S$. 
\end{theorem}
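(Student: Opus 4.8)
The plan is to derive the theorem by chaining the three preceding lemmas, using the determinism and termination of the procedure to turn a statement about reachable stuck configurations into one about derivability. First I would fix the meaning of the judgement $\emptyset \vdash T \subtypet \decore{S}$ occurring in the statement. Because $\textsf{Asmp}$ (and likewise the new rules $\textsf{Asmp}2$ and $\textsf{Asmp}3$) has priority, and because the remaining rules are pairwise mutually exclusive, at most one rule applies to any judgement; hence proof search is deterministic and builds a single derivation tree. The judgement $\emptyset \vdash T \subtypet \decore{S}$ is then derivable precisely when every branch of this tree is closed by one of the terminal rules ($\textsf{End}$, $\textsf{Asmp}$, $\textsf{Asmp}2$, $\textsf{Asmp}3$), i.e.\ when no branch reaches a configuration $\Sigma'' \vdash T'' \subtypet S'' \notrderiv$ to which no rule applies.

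The crucial role of Lemma~\ref{lemma:termination} is to exclude the remaining a priori possibility, namely an infinite branch. Since $T$ and $S$ satisfy the hypotheses of that lemma in both cases of the statement, every $\rderiv$-sequence issuing from $\emptyset \vdash T \subtypet \decore{S}$ is finite; as every rule has finitely many premises, the derivation tree is finitely branching, so by K\"onig's lemma it is finite. Consequently the algorithm always halts, and $\emptyset \vdash T \subtypet \decore{S}$ is \emph{not} derivable exactly when some branch gets stuck, that is, exactly when $\exists \Sigma'', T'', S'' \ldotp \emptyset \vdash T \subtypet \decore{S} \rderiv^* \Sigma'' \vdash T'' \subtypet S'' \notrderiv$.

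Next I would transport this condition between the two variants of the procedure and then down to the subtyping relation. By Lemma~\ref{lem:soundness}, a stuck configuration is reachable from $\emptyset \vdash T \subtypet \decore{S}$ for the annotated algorithm if and only if a stuck configuration is reachable from $\emptyset \vdash T \subtypea S$ for the plain procedure; and by Lemma~\ref{lem:semidecidable} the latter holds if and only if $T \not\!\!\subtype S$. Chaining the two equivalences yields that a stuck configuration is reachable from $\emptyset \vdash T \subtypet \decore{S}$ if and only if $T \not\!\!\subtype S$. Negating both sides and invoking the halting argument of the previous paragraph, the absence of a reachable stuck configuration is exactly the derivability of $\emptyset \vdash T \subtypet \decore{S}$, while the negation of $T \not\!\!\subtype S$ is $T \subtype S$; thus $\emptyset \vdash T \subtypet \decore{S}$ if and only if $T \subtype S$, as claimed.

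I expect the only delicate point to be the bookkeeping in the first two paragraphs: one must argue carefully that, under determinism together with termination, the \emph{existential} statement ``some path gets stuck'' is the exact complement of the \emph{universal} statement ``the derivation succeeds,'' so that negating Lemma~\ref{lem:semidecidable} genuinely produces derivability of $\emptyset \vdash T \subtypet \decore{S}$ rather than merely the non-existence of a failing path. All the substantive work---finiteness of the search space and the behavioural equivalence of the two procedures---has already been carried out in Lemmas~\ref{lemma:termination} and~\ref{lem:soundness}, so what remains is essentially this propositional repackaging.
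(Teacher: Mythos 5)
Your proposal is correct and follows essentially the same route as the paper, whose proof of Theorem~\ref{thm:algcor} simply states that it follows immediately from Lemma~\ref{lem:semidecidable}, Lemma~\ref{lemma:termination} and Lemma~\ref{lem:soundness}. The bookkeeping you add---determinism of rule application, finiteness of the search tree via termination, and the complementation of ``some branch gets stuck'' against ``the derivation succeeds''---is exactly the implicit glue the paper leaves to the reader, spelled out correctly.
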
 
\proof It follows immediately from Lemma~\ref {lem:semidecidable},
Lemma~\ref{lemma:termination} and Lemma~\ref{lem:soundness}. \qed


As an obvious consequence of algorithm correctness, we have that the two relations 
$\selsubtype_{\mathsf{sout}}$ and $\selsubtype_{\mathsf{sin}}$ are decidable.

\begin{corollary}
The asynchronous single-choice output relation $\selsubtype_{\mathsf{sout}}$
and the asynchronous single-choice input relation $\selsubtype_{\mathsf{sin}}$
are 
decidable.
\end{corollary}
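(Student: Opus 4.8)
The plan is to read the corollary as an immediate packaging of the decision procedure already established, the only additional remark being that the side conditions defining the two relations are themselves effectively checkable. Unfolding Definitions~\ref{def:selection_subtyping}, \ref{def:outrel} and~\ref{def:inrel}, a pair $(T,S)$ lies in $\selsubtype_{\mathsf{sout}}$ exactly when $T \in T^{\textsf{out}}$, $S \in T^{\textsf{in}} \cap T^{\textsf{out}}$, both $T$ and $S$ belong to $\noinf$, and $T \subtype S$; symmetrically, $(T,S) \in \selsubtype_{\mathsf{sin}}$ exactly when $T \in T^{\textsf{out}} \cap T^{\textsf{in}}$, $S \in T^{\textsf{in}}$, both $T,S \in \noinf$, and $T \subtype S$. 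So I would split each membership question into the conjunction of four syntactic tests and one run of the algorithm.

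First I would argue that every side condition is decidable. The classes $T^{\textsf{out}}$ and $T^{\textsf{in}}$ are defined by the grammars of Definitions~\ref{def:sessiontypesnooutput} and~\ref{def:sessiontypesnoinput}, so recognising them is a single structural traversal of the finite syntax tree, checking that each output selection (resp.\ input branching) is a singleton. Membership in $\noinf$ (Definition~\ref{def:sessiontypesnoinfinite}) is likewise a structural property, namely that every recursion variable occurs guarded by an input branching, and is again decidable by inspecting the finite term. Hence all four conditions can be evaluated mechanically.

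Next I would reduce the remaining obligation $T \subtype S$ to a single invocation of the algorithm. Once a pair $(T,S)$ has passed the side-condition tests it falls precisely within the scope of Theorem~\ref{thm:algcor}: in the $\selsubtype_{\mathsf{sout}}$ case one has $T \in T^{\textsf{out}}$ and $S \in T^{\textsf{in}} \cap T^{\textsf{out}}$, while in the $\selsubtype_{\mathsf{sin}}$ case one has $T \in T^{\textsf{out}} \cap T^{\textsf{in}}$ and $S \in T^{\textsf{in}}$. By Lemma~\ref{lemma:termination} the algorithm launched from $\emptyset \vdash T \subtypet \decore{S}$ always halts on such inputs, and by Theorem~\ref{thm:algcor} (which itself rests on Lemma~\ref{lem:semidecidable} and Lemma~\ref{lem:soundness}) it succeeds exactly when $T \subtype S$. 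Conjoining this Boolean answer with the outcome of the four syntactic tests yields a total decision procedure for each of $\selsubtype_{\mathsf{sout}}$ and $\selsubtype_{\mathsf{sin}}$.

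I do not expect a genuine obstacle: the hard work — termination together with soundness and completeness of the algorithm — is already discharged. The only point deserving care is that the scope of Theorem~\ref{thm:algcor} is in fact strictly \emph{larger} than $\selsubtype_{\mathsf{sout}}$ and $\selsubtype_{\mathsf{sin}}$, since the theorem drops the $\noinf$ requirement. This is harmless for the corollary, because I reinstate the $\noinf$ condition explicitly as one of the decidable side tests before calling the algorithm, so correctness on the two smaller relations follows \emph{a fortiori} from correctness on the wider class.
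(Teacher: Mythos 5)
Your proposal is correct and matches the paper's own proof: both reduce membership in $\selsubtype_{\mathsf{sout}}$ and $\selsubtype_{\mathsf{sin}}$ to decidable syntactic checks ($T^{\textsf{out}}$, $T^{\textsf{in}}$, $\noinf$) followed by a single run of the algorithm justified by Theorem~\ref{thm:algcor}. Your closing remark that the theorem's scope is strictly larger because it drops the $\noinf$ requirement is exactly the observation the paper itself makes when introducing the decidability results.
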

\proof
In order to verify whether $T \, \selsubtype_{\mathsf{sin}} \, S$ it is sufficient
to check that $T \in T^{\textsf{in}} \cap T^{\textsf{out}} \cap \noinf$
and $S \in T^{\textsf{in}} \cap \noinf$ and then verify whether $T \, \subtype \, S$,
which is decidable for terms belonging to these sets as proved in Theorem~\ref{thm:algcor}.
For $T \, \selsubtype_{\mathsf{sout}} \, S$ it is possible to proceed in the same
way with the only difference that
the check is whether $T \in T^{\textsf{out}} \cap \noinf$
and $S \in T^{\textsf{in}} \cap T^{\textsf{out}} \cap \noinf$.
\qed

Notice that with respect to the general case,
when our algorithm is applied to the restricted case of 
$\selsubtype_{\mathsf{sout}}$ and $\selsubtype_{\mathsf{sin}}$,
rule $\textsf{Asmp}3$ and premise $\& \in T$ of 
rule $\textsf{Asmp}2$ become useless. This follows from the fact
that types in $\noinf$ never satisfy the premises of $\textsf{Asmp}3$
and always satisfy the premise $\& \! \in \! T$ of 
rule $\textsf{Asmp}2$.

%
%
%
%
%

%

\section{Conclusion and Related Work}
\label{subsec:conclusions}
\paragraph{Related Work}
\label{sec:concl}
Lange and Yoshida~\cite{LY16} have independently and simultaneously
provided an undecidability result for a class of communicating
automata called asynchronous duplex systems (which are shown to
correspond to a class of binary session types). They prove that
automata compatibility (checking whether two automata in parallel can
safely interact) is undecidable and then show that such result makes
also asynchronous session types subtyping undecidable. 
Their proof consists of an encoding of the termination problem for Turing machines (rather than our simpler and direct encoding based on queue machines) into deciding automata compatibility.
Most importantly, in order to prove undecidability of
a subtyping $\preceq$, 
a translation $\mathcal M$ from types to automata is used which
exploits the following result (Theorem 5.1, \cite{LY16}):
\begin{quote}
  Given two types $T$ and $S$, we have that $T\preceq S$ if and only
  if $\mathcal M(T)$ is compatible with $\mathcal M(\overline S)$. 
\end{quote}
Above, $\overline S$ is the dual of $S$, the type obtained from $S$ by
inverting inputs (outputs) with outputs (inputs).  As compatibility
between communicating automata is a symmetric relation, we have that
this approach to prove undecidability of subtyping can be applied only
to dual closed relations $\preceq$, i.e., $T \preceq S$ if and only if
$\overline S \preceq \overline T$.  Among all the subtyping relations
in the literature (discussed in details in Section \ref{subsec:impact}), 
this property holds only for the definition
of subtyping by Chen et al.~\cite{MariangiolaPreciness}, where orphan
messages are not admitted.  For instance, we have that the type
$\Trec t.\Tselectsimple{l}{\Tvar t}$ is a subtype of
$\Trec t.\Tbranchsimple{l'}{\Tselectsimple{l}{\Tvar t}}$ but obviously
$\Trec t.\Tselectsimple{l'}{\Tbranchsimple{l}{\Tvar t}}$ is not a
subtype of $\Trec t.\Tbranchsimple{l}{\Tvar t}$.
Moreover, differently from Lange and Yoshida~\cite{LY16} we show that, to get undecidability, it is sufficient to
consider a restricted version of asynchronous binary session subtyping ($\selsubtype$ relation) that is much less expressive and
that cannot be further simplified by imposing limitations on the branching/selection structure of types 
(otherwise it becomes decidable).
As shown, this has one plain advantage: it allowed us to easily show undecidability of various existing subtyping relations.

Concerning decidability,
Lange and Yoshida~\cite{LY16} independently and simultaneously
proved a result similar to ours (Theorem \ref{thm:algcor}).
They present an algorithm for deciding subtyping between types $T$ and $S$, with
one between $T$ and $S$ being a single-choice session type (i.e.\ a type where every branching/selection has a single choice) and the other one being a general (any) session
type. In particular, they consider a dual closed subtyping relation 
defined following the orphan-message free approach by Chen et al.~\cite{MariangiolaPreciness}.
On the other hand, we show 
(Theorem \ref{thm:algcor})
$T \,\subtype\, S$ to be decidable when: one between $T$ and $S$ is a single-choice session type (as for Lange and Yoshida) and the other one is either
a type with single outputs, if it is $T$, or a type with single inputs, if it is $S$.
We first notice that our $\subtype$ relation 
is more general 
with respect to the one used by Lange and Yoshida 
in that it does not include the orphan message free constraint. 
Moreover, although it may seem that Lange and Yoshida 
can effectively relate 
types that do not fall under our (more restricted) syntactical characterization,
covariance and contravariance prevent any type containing at
least one non-single input branch and 
 one non-single output selection (both
reachable in the subtyping simulation game) to be related with a
single-choice type.
We finally observe that 
since, 
differently from the relation used by Lange and Yoshida, 
our $\subtype$ relation is not dual closed,
we need to explicitly carry out two separate proofs for the two cases of $T \,\subtype\, S$:
one where the single-choice session type is $T$ and another one where it is, instead,~$S$.





%
%

\paragraph{Conclusion}
We have proven that asynchronous subtyping for session types is
undecidable. Moreover, we have shown that subtyping becomes decidable
if we put some restrictions on the branching/selection structure. 
As future work, we plan to search for alternative subtyping relations
that enjoy properties similar to $\subtype$, but are decidable.

%

\section*{References}
\bibliography{biblio}

\end{document}